\theoremstyle{plain}
\newtheorem{problem}[theorem]{Problem}
\newenvironment{parameterizedproblem}%
{%
  \leavevmode\nobreak\par
  \begin{list}%
    {}%
    {%
      \def\labelstyle{\itshape}
      \setlength{\topsep}{0pt}%
      \settowidth{\labelwidth}{\labelstyle Parameter:}%
      \setlength{\leftmargin}{\labelwidth}%
      \addtolength{\leftmargin}{\labelsep}%
      \setlength{\itemsep}{0pt}%
      \setlength{\parsep}{0pt}%
    }%
      \def\instance{\item[\labelstyle Instance:]}%
      \def\parameter{\item[\labelstyle Parameter:]}%
      \def\question{\item[\labelstyle Question:]}%
    }%
    {%
  \end{list}%
}
\newcommand\Class[1]{%
  \mathchoice%
  {\text{\normalfont\fontsize{9pt}{10pt}\selectfont$\mathrm{#1}$}}%
  {\text{\normalfont\fontsize{9pt}{10pt}\selectfont$\mathrm{#1}$}}%
  {\text{\normalfont$\mathrm{#1}$}}%
  {\text{\normalfont$\mathrm{#1}$}}%
}
\newcommand\Para{\mathrm{para\text-}}
\newcommand{\Lang}[1]{%
  \ifmmode{%
    \text{\normalfont\textsc{#1}}%
  }%
  \else
  {\normalfont\textsc{#1}}%
  \fi}
\newcommand\PLang[1][]{p\def\test{#1}\ifx\test\stockhustantauempty\else_{\mathrm{#1}}\fi\text-\penalty15\Lang}
\def\stockhustantauempty{}
\newcommand\PLangText[1][]{$p\def\test{#1}\ifx\test\empty\else_{\mathrm{#1}}\penalty15\fi$-\penalty15\hskip0pt\textsc}
\let\origlemma\lemma
\def\lemma#1#2#3\end{\origlemma%
  \expandafter\gdef\csname savedlemma#2\endcsname{#3}%
  \label{#2}#3\end}
\def\reclaimlemma#1{\claimagain{Claim of Lemma~\ref{#1}}\expandafter\expandafter\expandafter\ignorespaces\csname savedlemma#1\endcsname\endclaimagain\ignorespaces}
\let\origtheorem\theorem
\def\theorem#1#2#3\end{\origtheorem%
  \expandafter\gdef\csname savedtheorem#2\endcsname{#3}%
  \label{#2}#3\end}
\def\reclaimtheorem#1{\claimagain{Claim of Theorem~\ref{#1}}\expandafter\expandafter\expandafter\ignorespaces\csname savedtheorem#1\endcsname\endclaimagain\ignorespaces}
\let\origcorollary\corollary
\def\corollary#1#2#3\end{\origcorollary%
  \expandafter\gdef\csname savedcorollary#2\endcsname{#3}%
  \label{#2}#3\end}
\def\reclaimcorollary#1{\claimagain{Claim of Corollary~\ref{#1}}\expandafter\expandafter\expandafter\ignorespaces\csname savedcorollary#1\endcsname\endclaimagain\ignorespaces}
\title{Fast Parallel Fixed-Parameter Algorithms via Color Coding}
\author[1,2]{Max Bannach}
\author[1]{Christoph Stockhusen}
\author[1]{Till Tantau}
\affil[1]{%
  Institute for Theoretical Computer Science\\
  Universit\"at zu L\"ubeck\\
  L\"ubeck, Germany\\
  \texttt{\{bannach,stockhus,tantau\}@tcs.uni-luebeck.de}
}
\affil[2]{%
  Graduate School for Computing in Medicine and Life Sciences\\
  Universit\"at zu L\"ubeck
}
\authorrunning{M. Bannach, C. Stockhusen, and T. Tantau}
\subjclass{F.1.3 Complexity Measures and Classes}
\keywords{color coding, parallel computation, fixed-parameter
  tractability, graph packing, cutting~$\ell$~vertices, cluster
  editing, tree-width, tree-depth, model checking}
\begin{document}

\maketitle

\begin{abstract}
  Fixed-parameter algorithms have been successfully applied to solve
  numerous difficult problems within acceptable time bounds on large
  inputs. However, most fixed-parameter algorithms are inherently
  \emph{sequential} and, thus, make no use of the parallel hardware present
  in modern computers. We show that parallel fixed-parameter
  algorithms do not only exist for numerous parameterized problems 
  from the literature -- including vertex cover, packing problems,
  cluster editing, cutting vertices, finding embeddings, or finding
  matchings -- but that there are parallel algorithms working
  in \emph{constant} time or at least in time \emph{depending only on
    the parameter} (and not on the size of the input) for these
  problems. Phrased in terms of complexity classes, we place numerous
  natural parameterized 
  problems in parameterized versions of AC$^0$. On a more technical
  level, we show how the \emph{color coding} method can be implemented
  in constant time and apply it to embedding problems for graphs of
  bounded tree-width or tree-depth and to model checking first-order
  formulas in graphs of bounded degree.
\end{abstract}

\section{Introduction}

The classical objective of parameterized complexity theory is to
determine for a parameterized problem whether it can be solved by an
algorithm running in time $f(k)\cdot n^c$, where $f$ is some
function, $k$ is a parameter, $n$ is the input length, and $c$
is some constant. Such algorithms are nowadays routinely used to solve 
large instances for $\Class{NP}$- or even $\Class{PSPACE}$-hard
problems within acceptable amounts of time. Nevertheless,
``acceptable'' is not the same as ``small'' and one would like to
further reduce the runtime 
by using multiple cores to speed up the computation. For this, one needs
\emph{parallel} fixed-parameter algorithms, but most fixed-parameter
algorithms have been devised with a sequential computation model in
mind. Indeed, the most important tool of parameterized complexity
theory, namely kernelization, is inherently sequential: 
It asks us to repeatedly apply rules to an input, each time modifying
the input slightly and making it a little smaller, until the input's
size only depends on the parameter.  There is no straightforward way
of parallelizing such algorithms since later modifications strongly
depend on what happened earlier, forcing us to apply the typically
very large number of kernelization steps in a sequential manner. 

\subparagraph*{Our Contributions.}
The purpose of the present paper is to show that not only do parallel
fixed-parameter algorithms exist for many natural, well-studied
problems from the literature; for certain problems there are even
parallel algorithms that require only \emph{constant} time in a
concurrent-read, concurrent-write
\textsc{pram} model (so the runtime is totally independent of the
input) or at least time \emph{depending only on the parameter} (so the
length of the input is irrelevant). In all cases,
the \emph{work} done by the algorithms is still $f(k) \cdot n^c$, that
is, the same as the time bound for sequential fixed-parameter
algorithms.\footnote{The \emph{work} done by a parallel algorithm is the total
number of computational steps made by all computational units during
a computation. Since ``all work needs to be done,'' in practice the
runtime of a parallel algorithm is its work divided by the number of
available cores. In particular, the \emph{work} done by
a \emph{parallel algorithm} should not exceed the \emph{runtime} of a
\emph{sequential algorithm} for the same problem. In our 
case, this means that in order to compete with sequential algorithms
running in ``\textsc{fpt} time,'' our parallel algorithm must not only
be fast, but may only do ``\textsc{fpt} work.''} Phrased more
formally, our objective is to identify parameterized problems that
lie in the complexity classes $\Para\Class{AC}_{O(1)}$ and
$\Para\Class{AC}_{f(k)}$ (formal definitions will be given later).

In order to tackle the parallel parameterized complexity of natural
problems like the vertex cover problem, we introduce three technical
tools. The first and foremost is \emph{color coding}: all of our
proofs employ this 
technique at least indirectly and we show that the \emph{universal
  coloring families} that lie at the heart of the technique can be
computed in constant time. Second, numerous natural ``packing
problems'' are special cases of the following embedding problem: Given
graphs $H$ and $G$, find a (not necessarily induced) subgraph of $G$
that is isomorphic to~$H$. We give new bounds on the complexity of
this problem when $H$ 
has bounded tree-width or bounded tree-depth; and these bounds later
translate directly to bounds on different packing problems. Third, we
translate an algorithmic meta-theorem of Flum and
Grohe~\cite{FlumG2002} to the parallel world: We show that model
checking first-order properties of graphs can be done in parallel in
time depending only on the parameters (actually, only on the locality
rank of the formula), where the parameters are the to-be-checked
formula and the degree of the graph. 

We then apply the tools to a wide variety of natural graph problems,
namely \emph{packing problems,} \emph{covering problems,}
\emph{clustering   problems,} and \emph{separation problems.}
For \emph{packing problems} the objective is to determine whether a
given graph $G$ contains $k$ vertex-disjoint copies of some fixed
graph~$H$ like, say, a triangle. Even for triangles, this problem is already
$\Class{NP}$-complete, but when $k$ is considered as a
parameter, the triangle packing problem lies in
$\Class{FPT}$~\cite{FellowsHRST2004}. We show that there is a constant-time,
\textsc{fpt}-work algorithm for triangle packing -- and indeed for
packing any graph of fixed size. The \emph{covering problems} we study
include the vertex cover problem and its partial version. We present a
constant parallel time algorithm for the first problem and an
algorithm for the second needing time depending only 
on the parameter. These results nicely reflect on a theoretical basis
the ``empirical'' observation that $\PLang{vertex-cover}$ is one of
the ``easiest'' parameterized problems and that the partial version is
a bit harder to solve. For \emph{clustering
  problems,} also known as  \emph{cluster editing problems,} the
objective is to transform a graph by adding or deleting few edges into
a collection of ``clusters'' -- which are just cliques in the simplest
case. We present a constant time, \textsc{fpt}-work algorithm for
cluster editing. For \emph{graph separation problems} the
objective is to ``cut away'' a special part of a graph using few
vertices. We show that certain versions of these problems can be
solved by a parallel fixed-parameter algorithm in time depending only
on the parameter and \textsc{fpt} work (while other versions are known
to be $\Class W[1]$-hard).

\subparagraph*{Related Work.}

There is a growing body of literature reporting on the practicalities
of implementing fixed-parameter algorithms in
parallel~\cite{AbuKhzamLSS2006}. 
In contrast, there are only few
results addressing parallel fixed-parameter tractability on a
theoretical level (as we do in the present paper), see for instance
Cesati and Di Ianni~\cite{Cesati:1998fe}. 
Since it is well-known from
classical complexity theory that problems solvable in logarithmic
space can be parallelized well, previous research on
\emph{parameterized logarithmic space} contributes to our
understanding of which parameterized problems can be parallelized in
principle. This research was started by Cai, Chen, Downey, and
Fellows~\cite{CaiCDF1997}. First (quite technical) complete problems
for parameterized logarithmic space where later introduced by Chen,
Flum, and Grohe~\cite{ChenFG2003}, and by Flum and
Grohe~\cite{FlumG2002}. A more structural study of parameterized space
and circuit classes (which addresses parallelization more directly)
was later made by Elberfeld and the last two
authors~\cite{ElberfeldST2014}.
Parameterized Circuit Complexity was
also studied by Downey et al.~with respect to the Weft
Hierarchy~\cite{Downey:1998oq}. 
Recently,
Chen and M\"uller~\cite{ChenM2014B} connected color coding and
parameterized space in an algorithm for finding embedding of bounded
tree-depth graphs in parameterized logarithmic space (a result which
we strengthen considerably in Corollary~\ref{corollary:subgraphAC}). 

The first use of the color coding technique can be traced back to
Alon, Yuster, and Zwick~\cite{AlonYZ1994}. They used the technique to
provide an $\Class{FPT}$-algorithm that decides whether there is an
embedding of a graph~$H$ of bounded tree-\emph{width} into another
graph $G$, where $H$ is the parameter.

\subparagraph*{Organization of This Paper.}

In Section~\ref{section:preliminaries} we give formal definitions of
the classes of problems solvable by parallel fixed-parameter
algorithms. While most of our definitions and classes are standard,
the class of problems solvable in ``time depending on the parameter
and \textsc{fpt} work'' seems to be new. In
Section~\ref{section:tools} we introduce our three
technical tools -- color coding, embeddings, and model checking -- and
prove the results mentioned earlier. In
Section~\ref{section:applications} we study the complexity of the
natural parameterized graph problems and establish new upper bounds on
their complexities.
Due to lack of space, proofs have been moved to the appendix; we give
proof sketches for some of them in the main text.  

\section{Classes of Fixed-Parameter Parallelism}
\label{section:preliminaries}

For our definition of parallel fixed-parameter tractability, we mostly
use the standard terminology of parameterized complexity theory, see
for instance~\cite{FlumG2006}: A \emph{parameterized problem} is a
tuple $(Q,\kappa)$ of a language $Q\subseteq\Sigma^*$ over an
alphabet~$\Sigma$ and a parameterization
$\kappa\colon\Sigma^*\to\mathbb{N}$ that maps instances to parameter
values.  
In the classical definition, Downey and Fellows~\cite{DowneyF1999}
require the parameterization to be computable, while Flum and
Grohe~\cite{FlumG2006} require it to be computable in polynomial
time. Elberfeld and the last two authors require it to be computable in
logarithmic space~\cite{ElberfeldST2014} and mention that it would be better if
the parameterization is first-order computable ($\Class{FO}$-computable) or,
equivalently, to be computable by logarithmic-time-uniform constant
depth circuits~\cite{Vollmer1999}. Since we will only deal with
parameterized circuit classes that lie within parameterized
logarithmic space, we will require all
parameterizations to be $\Class{FO}$-computable.
We denote parameterized problems with a leading ``$p$-'' as
in $\PLang{vertex-cover}$ and, when the parameter may be unclear,
add it as an index as in $\PLang[\mathit{|H|}]{emb}$.

A parameterized problem is \emph{fixed-parameter tractable} if it
can be decided in time $f(\kappa(x))\cdot |x|^{c}$ for any
input~$x$, where $f$ is some computable function and $c$ a constant. An equivalent definition is
that there exists a set $R\in\Class P$, where $\Class P$ denotes the
class of languages decidable in 
polynomial time, such that $x \in Q$ iff $\bigl(x,1^f(\kappa(x))\bigr)
\in R$. The first  definition of fixed-parameter tractability gave
rise to the class name $\Class{FPT}$ in the literature, while the
second definition gives rise to the name  
$\Para\Class P$ for the same class. The advantage of the second
definition is that we can replace the class $\Class P$ in the
definition by arbitrary complexity classes and arrive at classes like 
\emph{parameterized logarithmic space,} $\Para\Class L$, or 
\emph{parameterized constant depth circuits},
$\Para\Class{AC}^0$. These parameterized classes inherit their
inclusion structure from the classical classes, so we have
\begin{align*}
  \Para\Class{AC}^0 \subsetneq \Para\Class{TC}^0
  \subseteq \Para\Class{NC}^1 \subseteq \Para\Class{L}
  \subseteq \Para\Class{NL} \subseteq \Para\Class{AC}^1
  \subseteq \Para\Class P.
\end{align*}

It is not quite obvious, but the class $\Para\Class{AC}^0$ already
captures  one of the types of algorithms mentioned in the
introduction, namely ``constant time, \textsc{fpt}-work,'' while none
of the above classes seems to capture ``parameter time,
\textsc{fpt}-work.'' For this reason and in 
order to explicitly spell out what $\Para\Class{AC}^0$ contains, we
provide a new definition:\footnote{The definition can trivially be
  adjusted to use TC-circuits or NC-circuits, but we will not need them.}

\begin{definition}[Classes of Parallel Fixed-Parameter Tractability]
  Let $d\colon\mathbb{N}^2\rightarrow\mathbb{N}$ be a \emph{depth bounding
    function} and $w\colon\mathbb{N}^2\rightarrow\mathbb{N}$ be a
  \emph{width bounding function} which both map each pair of an input
  length and a parameter to a number. We define
  $\Para\Class{AC}[d,w]$ as the class of parameterized problems
  $(Q,\kappa)$ for which there exists a
  \textsc{dlogtime}-uniform\footnote{We use
    \textsc{dlogtime}-uniform families since they are equivalent to
    first-order definable families and constitute one of the strongest
    forms of uniformity~\cite{Barrington:1990yg}.} family $(C_{n,k})_{n,k\in\mathbb N}$ of $\Class{AC}$-circuits
  (only \textsc{not}-, \textsc{and}-, and \textsc{or}-gates 
  are allowed, \textsc{and}- and \textsc{or}- gates may have
  unbounded fan-in) such that: 
  \begin{enumerate}
      \item For all $x \in \Sigma^*$, the circuit $C_{|x|,\kappa(x)}$
    evaluates to $1$ on input $x$ if, and only if, $x \in Q$.
      \item The depth of each $C_{n,k}$ is at most $d(n,k)$.
      \item The size of each $C_{n,k}$ is at most $w(n,k)$.
  \end{enumerate}
\end{definition}

In the present paper we
exclusively study parallel algorithms with ``\textsc{fpt}-work'' and are
therefore only interested in the case where $w$ is member of the
family $W$ of functions of the form $f(k)\cdot n^c$ for a computable
function $f$ and a constant $c$.
We introduce for arbitrary families $D$ of functions
$d\colon\mathbb{N}^2\rightarrow\mathbb{N}$ the
abbreviation $\Para\Class{AC}_D$ for $\bigcup_{d\in D, w\in
  W}\Para\Class{AC}[d,w]$.
For constant depth bounding functions the resulting class
$\Para\Class{AC}_{O(1)}$ is the same as the class $\Para\Class{AC}^0$.\footnote{Since 
  the designation $\Para\text{AC}^0$ has been used in previous
  publications and is a bit shorter, we will use it in the following.}
For arbitrary $i>0$ we obtain 
\(
  \Para\Class{AC}_{\{\,
    f(k)+c\cdot\log^i n\,\mid\,
    f\colon\mathbb{N}\rightarrow\mathbb{N}\wedge c\in\mathbb{N}
   \,\}}
  =\Para\Class{AC}^i
\)(in slight abuse of notation we will write such classes simply as
$\Para\Class{AC}_{f(k)+O(\log^i n)}$).

When the depth bounding function just depends on the 
parameter, so $d(n,k) = f(k)$, we get a new
class $\Para\Class{AC}_{f(k)}$ that we abbreviate with
$\Para\Class{AC}^{0\uparrow}$. This class does not seem to arise from
substituting some classical class for $\Class P$ in the definition of
$\Para\Class P$. 
In particular, this class seems to be incomparable
with all classes between $\Para\Class{TC}^0$ and $\Para\Class{NL}$. It is, however, clearly contained in
$\Para\Class{AC}^1$, and is strictly more powerful then
$\Para\Class{AC}^0$ as we will see later. 
This class captures the problems solvable in
``parameter time, \textsc{fpt}-work'' and we have
\[
  \Para\Class{AC}^{0}\subsetneq  \Para\Class{AC}^{0\uparrow}\subseteq  \Para\Class{AC}^{1}.
\]
 Let us define for arbitrary $i\geq 0$ the class
 $\Para\Class{AC}^{i\uparrow}$ as $\Para\Class{AC}_{f(k)\cdot O(\log^i
   n)}$. Notice that we have by definition the inclusion structure 
 \(
   \Para\Class{AC}^i\subseteq\Para\Class{AC}^{i\uparrow}\subseteq\Para\Class{AC}^{i+\epsilon}.
 \)

\section{Technical Tools}
\label{section:tools}

\subsection{Color Coding in Constant Parallel Time}

The idea of color coding is best understood by a concrete application,
for instance to the well-known matching problem: Given an
undirected graph~$G$ and a number~$k$, does $G$ contain $k$ edges such
that no two of them share any endpoints? Directly solving this problem
is not easy since the known polynomial-time algorithms for it are rather
involved. Consider, however, what 
happens when we randomly color the graph with $k$ colors and then
check whether \emph{the vertices of each color class contain at least one
  edge}. Clearly, if this is the case, there is a matching of size~$k$
-- and if there is no such matching, then no coloring will pass the
test. 

We now formalize the idea behind color coding and then show how the
colorings can be computed in constant time. It turns out that one can
derandomize the computation of a coloring: instead of random colorings we
use sets of colorings such that for every set of $k$ vertices and
``desired'' colors for them, at least one coloring colors the vertices
as desired:

\begin{definition}[Universal Coloring Families]
  For natural numbers $n$, $k$, and $c$, an \emph{$(n,k,c)$-\penalty0universal
    coloring family} is a set $\Lambda$ of functions
  $\lambda\colon\{1,\dots,n\}\to\{1,\dots,c\}$ such that for every
  subset $S\subseteq\{1,\dots,n\}$ of size $|S| = k$ and for every 
  mapping $\mu\colon S\to\{1,\dots,c\}$ there is at least one function
  $\lambda \in \Lambda$ with $\forall s \in S\colon \mu(s) =
  \lambda(s)$. 
\end{definition}

The matching problem can be solved easily when we have
access to a $(n,2k,k)$-universal coloring family: If there is a
matching of size $k$, the family will contain some coloring that
colors the two endpoints of the first edge with color $1$, the
endpoints of the second edge with color $2$, and so on. Thus there is,
indeed, a matching of size $k$ in the graph if for at least one
coloring every color class contains an edge. Since we can easily check
in parallel for all colorings whether this is the case for one of them,
the complexity of $\PLang[\mathit k]{matching}$ hinges critically on
the complexity of computing the universal coloring family and the size
of this family. 
The next theorem shows that $(n,k,c)$-universal coloring families of
reasonable size can be
computed ``in constant time and work $f(k,c) \cdot n^{O(1)}$,'' which
implies that $\PLang[\mathit k]{matching} \in \Para\Class{AC}^0$ holds:

\begin{theorem}\label{theorem:universalColoring}
  {There is a \textsc{dlogtime}-uniform family
  $(C_{n,k,c})_{n,k,c\in\mathbb N}$ of $\Class{AC}$-circuits without
  inputs such that each $C_{n,k,c}$
  \begin{enumerate}
  \item outputs an $(n,k,c)$-universal coloring family (coded as a
    sequence of function tables), 
  \item has constant depth (independent of $n$, $k$, or $c$), and
  \item has size at most $O(n\log c \cdot c^{k^2} \cdot k^4\log^2 n)$.
  \end{enumerate}}
\end{theorem}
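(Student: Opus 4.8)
The plan is to build the coloring family by a two-level hashing construction in the spirit of the Alon–Yuster–Zwick / Schmidt–Siegel perfect hashing toolkit, but carried out so that every step is a low-depth unbounded-fan-in circuit with no true inputs. First I would reduce the problem to \emph{perfect hashing}: if I have a family $H$ of functions $h\colon\{1,\dots,n\}\to\{1,\dots,k^2\}$ that is $k$-perfect (for every $k$-subset $S$ some $h\in H$ is injective on $S$), then I can compose each $h$ with \emph{all} functions $g\colon\{1,\dots,k^2\}\to\{1,\dots,c\}$; the composed family $\{g\circ h : h\in H,\ g\colon[k^2]\to[c]\}$ is $(n,k,c)$-universal, since given $S$ and $\mu$ I first pick $h$ injective on $S$ and then pick the $g$ that realizes the desired colors on the (distinct) values $h(S)$. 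There are $c^{k^2}$ functions $g$, which is where that factor in the size bound comes from, and enumerating all of them is a trivial constant-depth task.

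The remaining task is the $k$-perfect family $H$ of small size computed in constant depth. Here I would use the classical number-theoretic construction: a function of the form $x\mapsto \bigl((a\cdot x \bmod p)\bmod k^2\bigr)$ where $p$ is a prime with $n < p \le 2n$ (say) and $a\in\{1,\dots,p-1\}$. A standard counting argument shows that for any fixed $k$-subset $S$, the fraction of multipliers $a$ for which this map fails to be injective on $S$ is less than $1$ once the modulus is $\Theta(k^2)$; a slightly more careful version shows a single prime modulus $p = \Theta(k^2\log n)$ already works, or else one takes a short list of primes. Either way the family $H$ has size polynomial in $k$ and $\log n$, giving the $k^4\log^2 n$-type factor, and every $h\in H$ is specified by $O(\log n)$ bits. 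The circuit work is then: (i) generate all candidate primes $p$ in the relevant range — primality of an $O(\log n)$-bit number can be checked by a constant-depth circuit of size $\mathrm{poly}(n)$ just by trial division over all smaller candidates in parallel; (ii) for each $(p,a)$ and each $x\in\{1,\dots,n\}$, compute $(a x\bmod p)\bmod k^2$ — iterated addition/multiplication of $O(\log n)$-bit numbers and taking remainders are in $\Class{AC}^0$ when one hardwires the arithmetic, but since there are \emph{no inputs} and $n,k,c$ are fixed for $C_{n,k,c}$, the whole table is just a constant that can be produced by a depth-$O(1)$ circuit whose gates are determined by \textsc{dlogtime}-uniformity; (iii) compose with all $g\colon[k^2]\to[c]$ as above and write out the function tables. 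Each $\lambda$-table has $n$ entries of $\lceil\log c\rceil$ bits, there are $O(c^{k^2}k^4\log^2 n)$ of them, so the total size is $O(n\log c\cdot c^{k^2}\cdot k^4\log^2 n)$, matching (3).

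For the \textsc{dlogtime}-uniformity and the constant-depth claims I would argue that the circuit has an extremely regular, highly parametric structure: each output bit is indexed by a tuple (choice of prime, choice of multiplier, choice of $g$ encoded in $k^2\lceil\log c\rceil$ bits, input position $x$, bit position in the color), and from such an index one can compute in time $O(\log(\text{size}))$ what the gate at that position must be — the only nontrivial verification being that the prime and the arithmetic remainders are ``the right'' constants, which is a $\mathrm{polylog}$-time computation on the tuple. The main obstacle, and the place I expect to spend the most care, is precisely this last point: making the number-theoretic construction (choosing the prime, doing modular reduction) genuinely \emph{constant-depth} rather than merely $\Class{TC}^0$ or $\Class{AC}^0[\,p\,]$. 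The trick that makes it go through is that the circuits take \emph{no input}, so all arithmetic is on fixed numbers and can be ``precomputed into the wiring''; the real work is checking that a \textsc{dlogtime} machine can reconstruct this wiring, i.e.\ that iterated multiplication and division of numbers of magnitude $\mathrm{poly}(n)$ by a fixed modulus are \textsc{dlogtime}-computable when the operands are given in binary — which they are, since such numbers have $O(\log n)$ bits and the arithmetic is then on $O(\log n)$-bit quantities, well within the \textsc{dlogtime} budget relative to the (polynomial) circuit size.
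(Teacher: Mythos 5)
Your plan is essentially the paper's own proof: compose the number-theoretic $k$-perfect hash functions $x\mapsto(ax\bmod p)\bmod k^2$ with all maps $\omega\colon\{0,\dots,k^2-1\}\to\{1,\dots,c\}$, hardwire the resulting tables since the circuits have no inputs, and argue \textsc{dlogtime}-uniformity from the fact that the wiring is determined by arithmetic on $O(\log n)$-bit quantities. The only cosmetic differences are that the paper avoids primality testing altogether by taking \emph{all} moduli $p<k^2\log n$ (composite ones merely add harmless extra functions), whereas you propose trial division, and that your first-mentioned prime range $n<p\le 2n$ with all multipliers $a$ would overshoot the stated size bound, so you must indeed use the $p=\Theta(k^2\log n)$ variant you also mention.
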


\begin{proof}[Sketch of Proof]
  The family of universal coloring functions we construct is based on
  the concept of $k$-perfect hash functions~\cite{FlumG2006}, that,
  after slight modifications, provide us with the desired coloring
  properties. The crucial part is to implement them using circuits
  that are \textsc{dlogtime}-uniform. However, we can achieve this,
  since the numbers $n$, $k$, and $c$ are encoded in unary and the
  operations required to compute the functions are only additions,
  multiplications, and modulo operations.
\end{proof}

Investigating a parameterized version of matching may seem a bit
strange at first sight -- matching is even known to be solvable in
randomized polylogarithmic parallel 
time. However, the exact parallel time complexity is still open in the
classical setting while from a parameterized perspective, we just saw
that the matching can be solved \emph{very} quickly in
parallel. Another problem that one would maybe not
expect to be studied in the parameterized setting,
but which will be useful in a number of situations, is
$\PLang{threshold}$. The inputs are a bitstring $b \in \{0,1\}^n$ and
a parameter~$t$. The question is whether there are at least $t$ many
$1$'s in $b$. Clearly, the unparameterized version is complete for
$\Class{TC}^0$, and using the fact that the problem lies in
$\Class{AC}^0$ for polylogarithmic thresholds~\cite{NewmanRW1990}
yields the fact that its parameterized version lies in
$\Para\Class{AC}^0$. However, this result requires profound
result of circuit complexity and is rather involved,
but using color coding we can give a very simple proof
of this fact:

\begin{lemma}\label{lem:threshold}
  $\PLang{threshold} \in \Para\Class{AC}^0$.
\end{lemma}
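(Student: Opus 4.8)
The plan is to reduce $\PLang{threshold}$ to a simple test over a universal coloring family, exactly as was done for $\PLang[\mathit k]{matching}$ in the discussion preceding the lemma. Given a bitstring $b\in\{0,1\}^n$ with parameter $t$, I want to decide whether $b$ has at least $t$ ones. The key idea: if there are at least $t$ ones, then among the positions $\{1,\dots,n\}$ there is a set $S$ of size $t$ all of whose positions carry a $1$, and a $(n,t,t)$-universal coloring family $\Lambda$ contains a function $\lambda$ that assigns the $t$ elements of $S$ the $t$ distinct colors $1,\dots,t$ (take $\mu\colon S\to\{1,\dots,t\}$ to be any bijection). Conversely, if some $\lambda\in\Lambda$ has the property that every color class $1,\dots,t$ contains at least one position $i$ with $b_i=1$, then picking one such witness position per color yields $t$ distinct positions all carrying a $1$, so $b$ has at least $t$ ones.

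\smallskip
First I would invoke Theorem~\ref{theorem:universalColoring} with $c=k=t$ to obtain, in constant depth and \textsc{fpt} work, the circuit producing an $(n,t,t)$-universal coloring family $\Lambda$ of size $f(t)\cdot n^{O(1)}$ (here $f(t)=t^{t^2}\cdot\operatorname{poly}(t)$ up to the logarithmic factors), coded as a list of function tables. Next I would build, on top of this, a constant-depth circuit that for each $\lambda\in\Lambda$ and each color $j\in\{1,\dots,t\}$ computes the \textsc{or} over all positions $i\in\{1,\dots,n\}$ of $(\,[\lambda(i)=j]\wedge b_i\,)$ — this is an unbounded fan-in \textsc{or} of depth $O(1)$, and the equality test $[\lambda(i)=j]$ is available since the function table of $\lambda$ is an explicit output of the previous stage. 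Then I take the \textsc{and} over $j$ of these values (one color class must be nonempty for each $j$), and finally the \textsc{or} over all $\lambda\in\Lambda$. The whole circuit has constant depth, and its size is dominated by $|\Lambda|\cdot t\cdot n$, which is again of the form $f(t)\cdot n^{O(1)}$, so $\PLang{threshold}\in\Para\Class{AC}^0$. Uniformity is inherited from the \textsc{dlogtime}-uniformity of the family in Theorem~\ref{theorem:universalColoring} together with the trivially uniform wiring of the \textsc{or}/\textsc{and} layers.

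\smallskip
I expect the only real subtlety — and it is minor — to be the wiring between the coloring-family stage and the threshold-test stage: one must make sure that addressing the entry ``$\lambda(i)$'' inside the output of the first circuit, and comparing it against the constant $j$, can be done by a \textsc{dlogtime}-uniform constant-depth subcircuit. Since the function tables are laid out in a regular fashion and $t$ is encoded in unary, the bit positions of $\lambda(i)$ are computable by simple arithmetic (additions and multiplications on unary-encoded quantities), which is exactly the regime in which \textsc{dlogtime}-uniformity is unproblematic; this is the same observation already used in the proof of Theorem~\ref{theorem:universalColoring}. Everything else is a routine composition of constant-depth circuits, and the work bound is immediate from the size bound in Theorem~\ref{theorem:universalColoring}.
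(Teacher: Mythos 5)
Your proposal is correct and follows essentially the same route as the paper's proof: invoke Theorem~\ref{theorem:universalColoring} to obtain an $(n,t,t)$-universal coloring family and then test in parallel, for every coloring, whether each of the $t$ color classes contains a position carrying a~$1$. The only difference is that you spell out the circuit wiring, the converse direction, and the uniformity of the composition in more detail than the paper's terse argument, all of which is fine.
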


\subsection{Finding Embeddings of Graphs of Bounded Tree-Width and Depth}

A different way of looking at the matching problem is to see it as an
embedding problem: Instead of trying to find $k$ edges in a graph $G$
that have no endpoints in common, we can try to ``embed'' the graph~$H
= kK_2$, consisting of $k$ isolated edges, into~$G$. The advantage of
this different point of view is, of course, that it generalizes nicely:

\begin{problem}[{$\PLang{emb}(\mathcal H)$ for some
    class $\mathcal H$ of undirected graphs}]
  \begin{parameterizedproblem}
    \instance Two undirected graphs $H=(V_H, E_H) \in \mathcal H$ and
    $G=(V_G, E_G)$. 
    \parameter $H$
    \question Is there a injective homomorphism $\phi\colon V_{H}\to
    V_{G}$, that is, is $H$ isomorphic to a (not necessarily induced)
    subgraph of~$G$? 
  \end{parameterizedproblem}
\end{problem}

For arbitrary $\mathcal H$, the problem is easily be seen to be
$\Class W[1]$-hard by a reduction from $\PLang{clique}$. However, for 
restricted $\mathcal H$, the problem becomes fixed-parameter
tractable. The best results so far are by Chen and
M\"uller~\cite{ChenM2014B} who show that when $\mathcal H$ has bounded
tree-depth, $\PLang{emb}(\mathcal H)\in\Para\Class 
L$; when  $\mathcal H$ has bounded path-width, $\PLang{emb}(\mathcal
H)$ is the $\Para\Class L$-re\-duc\-tion closure 
of the distance problem in graphs, parameterized by the distance; and
when $\mathcal H$ has bounded tree-width, $\PLang{emb}(\mathcal H)$ is
the $\Para\Class L$-reduction closure 
of the embedding problem for trees, parameterized by the tree-size.
In contrast to these results, Amano showed for the unparameterized setting, in which we consider the size of $H$
to be a constant, that the problem can be solved in
$\Class{AC}^0$ with similar techniques~\cite{Amano:2010rm}.
We improve considerably on the first result of Chen and M\"uller
by proving that embeddings of graphs of bounded tree-depth can actually
be computed in $\Para\Class{AC}^0$. We complement their other results,
without improving them, by showing that for graphs of 
bounded tree-width (and, thereby, also for bounded path-width) the
embedding problem lies in $\Para\Class{AC}^{0\uparrow}$.  

In order to formulate our results, we first need to review the
definition of a tree-decomposition, see~\cite{FlumG2006} for a more detailed
introduction. A \emph{tree-decomposition} of a graph $H = (V,E)$ is a
tree $T$ together with a mapping $\iota$ from the nodes of $T$ to
subsets (called \emph{bags}) of~$V$ such that (1) for every edge
$\{u,v\} \in E$ there is some bag containing $u$ and $v$, that is,
there is some $x \in V$ with $\{u,v\} \subseteq \iota(x)$ and (2) for
every vertex $x \in V$ the set of nodes of~$T$ whose bags contain $x$
forms a connected subset of~$T$. The \emph{width} of tree-decomposition
is the size of its largest bag minus~$1$, its depth is the maximum of
the width and the depth of~$T$. Define $\operatorname{tw}(H)$ as the
minimum width any tree-decomposition of $H$ must have; define
$\operatorname{td}(H)$ similarly for the tree-depth.

\begin{theorem}\label{theorem:embtw}
  Given two graphs $H=(V_H,E_H)$ and $G=(V_G,E_G)$ together with a
  tree-decomposition $(T,\iota)$ of $H$. An embedding of $H$ into $G$ can be
  computed by an $\Class{AC}$-circuit of depth $O\big(\mathrm{depth}(T)\big)$
  and size $f(|V_H|)\cdot O\big(|V_G|^{\mathrm{width}(T)}\big)$, if
  such an embedding exists.
\end{theorem}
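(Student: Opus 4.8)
The plan is to combine color coding (Theorem~\ref{theorem:universalColoring}) with the classical bottom-up dynamic program along a tree-decomposition, and then to turn the whole computation --- including the reconstruction of a concrete embedding --- into an $\Class{AC}$-circuit that uses only \emph{constant depth per level of~$T$}. Put $k=|V_H|$ and fix a bijection $\mathrm{idx}\colon V_H\to\{1,\dots,k\}$. First I would have the circuit output an $(|V_G|,k,k)$-universal coloring family $\Lambda$ using Theorem~\ref{theorem:universalColoring}; this costs constant depth and size $f(k)\cdot|V_G|^{O(1)}$, and $|\Lambda|\le f(k)\cdot\mathrm{polylog}(|V_G|)$. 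The point of color coding here is that it lets us look only for \emph{colorful} embeddings: for a fixed $\lambda\in\Lambda$, an injective homomorphism $\phi\colon V_H\to V_G$ with $\lambda(\phi(v))=\mathrm{idx}(v)$ for all $v\in V_H$. If $H$ embeds into~$G$ at all, say via $\phi_0$, then $S=\phi_0(V_H)$ has size $k$ by injectivity, so by universality some $\lambda\in\Lambda$ realizes the coloring $\phi_0(v)\mapsto\mathrm{idx}(v)$ on~$S$, and for that $\lambda$ the map $\phi_0$ is $\lambda$-colorful; conversely every colorful embedding is an embedding. Crucially, colorfulness makes injectivity automatic --- distinct $H$-vertices carry distinct indices, hence are forced onto distinct, differently coloured $G$-vertices --- which is exactly what removes the need to track ``already used'' $G$-vertices in the dynamic program and keeps the state space polynomial.

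Next, root $T$ at an arbitrary node and, for each node~$x$, write $V_x$ for the union of the bags in the subtree $T_x$ hanging below~$x$. For a fixed $\lambda$, define a table $A_x$ indexed by functions $g\colon\iota(x)\to V_G$, with $A_x[g]=1$ iff $g$ extends to some $\psi\colon V_x\to V_G$ that is color-correct ($\lambda(\psi(u))=\mathrm{idx}(u)$ for all $u\in V_x$) and maps every edge of~$H$ with both endpoints in~$V_x$ to an edge of~$G$. The recurrence is the standard one: $A_x[g]=1$ iff $g$ itself is color- and edge-correct on $\iota(x)$ and, for every child~$y$ of~$x$, there is a $g'\colon\iota(y)\to V_G$ with $A_y[g']=1$ agreeing with~$g$ on $\iota(x)\cap\iota(y)$. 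Correctness of the gluing is the usual consequence of the connectivity axiom of tree-decompositions: an $H$-vertex occurring in the subtrees of two distinct children must lie in $\iota(x)$, so the child solutions --- all pinned to~$g$ on $\iota(x)$ --- fit together, and every edge of~$H$ lives inside a single bag and is therefore verified somewhere. Each table has at most $|V_G|^{|\iota(x)|}\le|V_G|^{\mathrm{width}(T)+1}$ entries, and each entry $A_x[g]$ is an unbounded-fan-in \textsc{and} over the children of an unbounded-fan-in \textsc{or} over the at most $|V_G|^{O(\mathrm{width}(T))}$ candidate extensions (each a lookup into $A_y[\,\cdot\,]$ plus a trivial agreement check), i.e.\ constant depth above the children's tables. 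Hence $A_{\mathrm{root}}$ is computed at depth $O(\mathrm{depth}(T))$, the \textsc{or} over all $\lambda\in\Lambda$ of ``$A_{\mathrm{root}}$ has a $1$-entry'' decides the problem, and the size is $f(k)\cdot|V(T)|^{O(1)}\cdot O(|V_G|^{\mathrm{width}(T)})$; the factor $|V(T)|^{O(1)}$ folds into $f(k)$ once we assume, as is standard, that the input tree-decomposition is reduced and thus has $O(|V_H|)$ nodes.

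It remains to \emph{output} an embedding rather than merely detect one, and for this I would use the parallel self-reduction trick. Once all tables are available, pick the least-index $\lambda\in\Lambda$ whose root table is non-empty, and then descend the tree committing to assignments: among the $g$ with $A_{\mathrm{root}}[g]=1$, the one of least index~$i$ is singled out by the single \textsc{and} $A_{\mathrm{root}}[g^{(i)}]\wedge\bigwedge_{j<i}\neg A_{\mathrm{root}}[g^{(j)}]$, computed for all~$i$ in parallel in constant depth, after which each output bit of the chosen~$g$ is an \textsc{or} of selector bits, again constant depth; then at each node~$x$, having already committed to $g_x$ on its wires, one selects for each child~$y$ the least-index $g'$ with $A_y[g']=1$ that agrees with~$g_x$ on $\iota(x)\cap\iota(y)$ by the same constant-depth selection, the agreement being a mere equality test against the values already carried by $g_x$. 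This adds one constant-depth layer per level of~$T$, hence another $O(\mathrm{depth}(T))$, and the committed $g_x$ are pairwise consistent by construction, so reading $\phi(v):=g_x(v)$ for the topmost~$x$ with $v\in\iota(x)$ yields the embedding. Summing the contributions --- constant depth for $\Lambda$, $O(\mathrm{depth}(T))$ for the tables, $O(1)$ for the $\lambda$-selection, $O(\mathrm{depth}(T))$ for the descent --- gives total depth $O(\mathrm{depth}(T))$ and size $f(|V_H|)\cdot O(|V_G|^{\mathrm{width}(T)})$.

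The genuinely delicate point is not the dynamic program --- that is Alon--Yuster--Zwick played out over a tree-decomposition --- but making \emph{every} part run in constant depth \emph{per tree level}, with no hidden dependence on $\log|V_G|$ or on the sizes of the tables; the ``least-index selection in $\Class{AC}^0$'' observation is what makes both the reconstruction and the choice among colorings cheap, and one has to check carefully that the agreement tests only ever compare against quantities already present on wires, so that no extra depth creeps in. A secondary nuisance is bookkeeping the exponent and the $|V(T)|$-dependence in the size bound, which is handled by the routine normalization of the tree-decomposition mentioned above.
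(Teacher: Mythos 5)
Your proof is correct and takes essentially the same route as the paper: a $(|V_G|,|V_H|,|V_H|)$-universal coloring family from Theorem~\ref{theorem:universalColoring} to enforce injectivity, combined with a bottom-up dynamic program over the tree-decomposition whose tables are indexed by bag-to-$V_G$ assignments (the paper phrases these as colorful subsets $I$ with their induced maps $\mu_I$), merged via agreement on $\iota(x)\cap\iota(y)$ and justified by the connectivity axiom, at constant circuit depth per level of~$T$. Your explicit top-down least-index-selection phase for actually outputting the embedding is a small addition the paper leaves implicit, and it only costs another $O(\mathrm{depth}(T))$ layers, so it does not change the bounds.
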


\begin{proof}[Sketch of Proof]
  Color the vertices of $H$ uniquely and compute a 
  $(|V_G|,|V_H|,|V_H|)$-universal coloring family. Starting from the
  leaves of the tree-decomposition, merge compatible partial
  homomorphisms for the vertices of the bags until we reach the root
  of the decomposition, and, thus, obtain a homomorphism for $H$. The
  number of iterative steps required for this equals the depth of the
  tree-decomposition.
\end{proof}
If $H$ is a parameter, we can compute a width- or depth-bounded
tree-decomposition $(T,\iota)$ of $H$ in a preprocessing step. This
implies the following corollaries:
\begin{corollary}\label{corollary:subgraphACk}
  Let $\mathcal H$ be the class of all graphs of tree-width at
  most~$d$ for some constant~$d$. Then
  $\PLang{emb}(\mathcal H)\in\Para\Class{AC}_{f(|H|)}\subseteq\Para\Class{AC}^{0\uparrow}$.
\end{corollary}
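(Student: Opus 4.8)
The plan is to reduce the statement to Theorem~\ref{theorem:embtw} by observing that a suitable tree-decomposition of $H$ comes ``for free,'' since it is a function of the parameter~$H$ alone. First I would note that, because $H\in\mathcal H$, we have $\operatorname{tw}(H)\le d$, so $H$ has a tree-decomposition $(T,\iota)$ of width at most~$d$ whose underlying tree has at most $|V_H|$ nodes (a standard fact: any tree-decomposition can be thinned, without increasing its width, until no bag is contained in that of an adjacent node, and such a decomposition has at most $|V_H|$ nodes). In particular $\mathrm{depth}(T)\le|V_H|$ and $\mathrm{width}(T)\le d$. Running the circuit of Theorem~\ref{theorem:embtw} on $H$, $G$ and this $(T,\iota)$ yields an embedding whenever one exists, and appending a constant-depth sub-circuit that checks whether the computed map is an injective homomorphism turns it into a decision circuit. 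Its depth is $O(\mathrm{depth}(T))=O(|V_H|)$, a function of the parameter, and its size is $f(|V_H|)\cdot O(|V_G|^{\mathrm{width}(T)})\le f(|V_H|)\cdot O(|V_G|^{d})$, which is of the form $g(|H|)\cdot n^{O(1)}$ and hence lies in~$W$. This shows that $\PLang{emb}(\mathcal H)$ lies in $\Para\Class{AC}_{f(|H|)}\subseteq\Para\Class{AC}^{0\uparrow}$, because the depth bound depends only on the parameter while the size bound lies in~$W$.

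The point I expect to be the main obstacle is uniformity: a single circuit has a fixed structure, whereas the decomposition $(T,\iota)$ above depends on the parameter, and we must not hide a tree-width computation inside the \textsc{dlogtime} uniformity machine. I would avoid this by having the circuit try \emph{all} candidates in parallel. The number of pairs $(T,\iota)$ with $T$ a tree on at most $|V_H|$ labelled nodes and $\iota$ assigning to each node a subset of~$V_H$ of size at most $d+1$ depends only on $|V_H|$, hence only on the parameter. For the $i$-th such candidate the circuit would (i)~check in constant depth whether it is actually a valid tree-decomposition of the input graph~$H$ (a test over an object of parameter-bounded size), (ii)~feed the input graphs $H,G$ together with this fixed $(T,\iota)$ into the corresponding circuit of Theorem~\ref{theorem:embtw}, (iii)~check that the output is an injective homomorphism of $H$ into $G$, and (iv)~output the conjunction of (i) and (iii); the whole circuit outputs the disjunction over all candidates.

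Correctness follows because $\operatorname{tw}(H)\le d$ guarantees that at least one candidate passes test~(i), and for that candidate Theorem~\ref{theorem:embtw} answers the embedding question correctly, while no candidate can cause a spurious acceptance since test~(iii) re-verifies the homomorphism independently. Enumerating the candidates in a fixed order keeps the family \textsc{dlogtime}-uniform, because the uniformity machine then only has to locate a gate inside a regular layout and never needs to reason about tree-width. Since the number of candidates is bounded in the parameter, multiplying it into the depth and size bounds of the first paragraph still leaves the depth a function of $|H|$ and the size of the form $g'(|H|)\cdot n^{O(1)}$; the remaining details are routine bookkeeping on top of Theorem~\ref{theorem:embtw}.
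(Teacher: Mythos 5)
Your proposal is correct and takes essentially the same route as the paper: both reduce the corollary to Theorem~\ref{theorem:embtw} by noting that a graph of tree-width at most~$d$ has a width-$\le d$ tree-decomposition with at most $|V_H|$ nodes, so the resulting circuit has depth $O(|V_H|)=f(|H|)$ and size $f(|V_H|)\cdot O(|V_G|^{d+1})$, i.e.\ \textsc{fpt} work. The only divergence is how the decomposition is supplied: the paper simply computes it ``in a preprocessing step'' depending only on the parameter (standard in this parameterized circuit framework, since the uniformity machine may spend time bounded in the parameter), whereas you enumerate and verify all parameter-many candidate decompositions inside the circuit and take a disjunction -- a harmless, somewhat more self-contained way of settling the same uniformity point, at the cost of a parameter-bounded blow-up in size.
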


\begin{corollary}\label{corollary:subgraphAC}
  Let $\mathcal H$ be the class of all graphs of tree-depth at
  most~$d$ for some constant~$d$. Then
  $\PLang{emb}(\mathcal H)\in\Para\Class{AC}^0$.
\end{corollary}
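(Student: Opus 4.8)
The plan is to combine Theorem~\ref{theorem:embtw} with the observation that for a graph $H$ of bounded tree-depth~$d$, one can choose a tree-decomposition $(T,\iota)$ whose depth is also bounded by a function of~$d$ alone, so that the circuit depth provided by Theorem~\ref{theorem:embtw} becomes a \emph{constant} rather than merely a function of the parameter. First I would recall that if $\operatorname{td}(H)\le d$, then $H$ admits an elementary tree-decomposition obtained from an optimal elimination forest: take the forest $F$ witnessing $\operatorname{td}(H)\le d$, and for each node $v$ of $F$ let the bag $\iota(v)$ consist of $v$ together with all its ancestors in $F$. One checks routinely that this is a valid tree-decomposition, that each bag has size at most $d$ (so $\operatorname{width}(T)\le d-1$), and — crucially — that $\operatorname{depth}(T)\le d$, since the depth of the elimination forest is exactly $\operatorname{td}(H)\le d$.

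Next I would invoke the standard fact that, given $H$ as input with $\operatorname{td}(H)\le d$ for a fixed constant $d$, such an elimination forest (and hence the decomposition $(T,\iota)$ above) can be computed as a preprocessing step whose cost depends only on $|H|$ — indeed this is exactly the kind of ``preprocessing on the parameter'' mentioned in the text just before the corollaries, and it is subsumed by allowing circuit size $f(|H|)\cdot\mathrm{poly}$. Feeding this $(T,\iota)$ into Theorem~\ref{theorem:embtw} yields an $\Class{AC}$-circuit deciding whether $H$ embeds into $G$ of depth $O(\operatorname{depth}(T)) = O(d)$ and size $f(|V_H|)\cdot O(|V_G|^{\operatorname{width}(T)}) = f(|V_H|)\cdot O(|V_G|^{d-1})$. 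Since $d$ is a constant, the depth is $O(1)$ and the size is of the form $f(|H|)\cdot n^{O(1)}$, i.e. it lies in the width family $W$. Uniformity of the whole family is inherited from the uniformity already guaranteed in Theorem~\ref{theorem:embtw} (which ultimately rests on the \textsc{dlogtime}-uniform universal coloring families of Theorem~\ref{theorem:universalColoring}), combined with the \textsc{FO}-computability of the parameterization; I would remark that the preprocessing step only needs to be performed once per value of the parameter, so it does not affect uniformity. Putting the pieces together gives a \textsc{dlogtime}-uniform $\Class{AC}$-circuit family of constant depth and \textsc{fpt} size, which is precisely the statement $\PLang{emb}(\mathcal H)\in\Para\Class{AC}^0$.

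The main obstacle I anticipate is not the embedding step — Theorem~\ref{theorem:embtw} does the real work — but rather being careful about the two numerical parameters that govern the circuit produced by that theorem: its depth is tied to $\operatorname{depth}(T)$ and its size to $\operatorname{width}(T)$, and the whole argument only works because bounded tree-depth simultaneously bounds \emph{both} of these quantities by a constant. (By contrast, bounded tree-width bounds only the width, which is why Corollary~\ref{corollary:subgraphACk} lands merely in $\Para\Class{AC}^{0\uparrow}$.) So the one genuinely load-bearing lemma to get right is the construction of a tree-decomposition from an elimination forest together with the verification that its depth equals the tree-depth; everything else is routine bookkeeping about uniformity and the fact that constants absorb into the $f(\cdot)$ and the $O(\cdot)$.
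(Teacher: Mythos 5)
Your proposal is correct and follows essentially the same route as the paper: precompute a tree-decomposition of $H$ whose width \emph{and} depth are bounded by the constant $d$ as a preprocessing step on the parameter, then feed it to Theorem~\ref{theorem:embtw} so that the circuit depth $O(\operatorname{depth}(T))$ becomes constant and the size stays of the form $f(|H|)\cdot |V_G|^{O(1)}$. The only cosmetic difference is that you derive the decomposition from an elimination forest, whereas under the paper's definition of tree-depth (minimum over tree-decompositions of the maximum of width and depth of $T$) such a decomposition exists by definition; your identification of the simultaneous depth-and-width bound as the load-bearing point, in contrast to Corollary~\ref{corollary:subgraphACk}, matches the paper's intent exactly.
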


We make two remarks at this point: First, one cannot generalize
Theorem~\ref{theorem:embtw} to clique-width since the embedding
problem for cliques, which have clique-width~1, is already hard for
$\Class W[1]$. Second, the theorem and the corollary also hold for
relational structures $\mathcal H$ and $\mathcal G$ and if we bound
the tree-width of $\mathcal H$'s Gaifman graph. Since paths have
tree-width~1, the complexity of one of the canonical problems for
color-coding~--~the $\PLang[\mathit k]{path}$ problem~--~can be
determined: $\PLang[\mathit k]{path}\in\Para\Class{AC}^{0\uparrow}$.
This allows us to give a short proof of
the following lemma on the complexity of the distance problem for
directed graphs where the distance is the parameter (one can also
prove this lemma directly quite easily): 

\begin{lemma}\label{lem:distance}
  $\PLang[\mathit d]{distance} \in \Para\Class{AC}_{f(d)}\subseteq\Para\Class{AC}^{0\uparrow}$.  
\end{lemma}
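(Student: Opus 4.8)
The plan is to reduce $\PLang[\mathit d]{distance}$ to a parameter-bounded number of directed-path problems and to solve each of these by color coding, exactly in the spirit of Theorem~\ref{theorem:embtw}. Recall first the elementary fact that in a directed graph $G$ the distance from $s$ to $t$ is at most $d$ if, and only if, there is an $\ell\in\{0,1,\dots,d\}$ and a \emph{simple} directed path with exactly $\ell$ edges from $s$ to $t$ --- a shortest $s$--$t$ walk never revisits a vertex, so the existence of an $s$--$t$ walk of length at most $d$ and the existence of a simple $s$--$t$ path of length at most $d$ are equivalent. It therefore suffices, for each of the $d+1$ candidate lengths $\ell$ in parallel, to decide whether the directed path $P_{\ell+1}$ on $\ell+1$ vertices embeds into $G$ with its first vertex mapped to $s$ and its last vertex mapped to $t$, and then to take the disjunction over all $\ell$.

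For each such $\ell$ I would proceed as in the proof sketch of Theorem~\ref{theorem:embtw}. Using Theorem~\ref{theorem:universalColoring}, compute a $(|V_G|,\ell+1,\ell+1)$-universal coloring family $\Lambda_\ell$ by an input-free $\Class{AC}$-circuit of constant depth and size $f(d)\cdot|V_G|^{O(1)}$; there are only $d+1$ of these families. For a fixed coloring $\lambda\in\Lambda_\ell$, the question ``is there a directed path $s=v_0,v_1,\dots,v_\ell=t$ with $\lambda(v_i)=i+1$ for all $i$?'' is answered by a layered reachability computation: in layer $i$ (for $i=1,\dots,\ell+1$) and for every vertex $v$ with $\lambda(v)=i$, set a bit to $1$ iff either $i=1$ and $v=s$, or $i>1$ and some in-neighbour $u$ of $v$ carried a $1$ in layer $i-1$; accept iff the layer-$(\ell+1)$ bit of $t$ equals $1$. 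Each layer is one constant-depth, unbounded-fan-in subcircuit, so this check has depth $O(\ell)=O(d)$ and size $|V_G|^{O(1)}$.

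Finally, combine everything by ORing the outcomes of these checks over all $\ell\le d$ and all $\lambda\in\Lambda_\ell$. Correctness is immediate in both directions: a witnessing sequence produced by the layered computation uses the $\ell+1$ pairwise distinct colors $1,\dots,\ell+1$ and hence consists of pairwise distinct vertices, i.e.\ is a genuine simple $s$--$t$ path of length $\ell$; conversely, if a simple $s$--$t$ path $v_0,\dots,v_\ell$ exists, universality of $\Lambda_\ell$ yields a $\lambda\in\Lambda_\ell$ with $\lambda(v_i)=i+1$ for all $i$, which the computation then detects. The overall depth is $O(d)$, a function of the parameter alone, and the overall size (work) is $\big(\sum_{\ell\le d}|\Lambda_\ell|\big)\cdot|V_G|^{O(1)}=f(d)\cdot|V_G|^{O(1)}$; hence $\PLang[\mathit d]{distance}\in\Para\Class{AC}_{f(d)}$, and $\Para\Class{AC}_{f(d)}\subseteq\Para\Class{AC}^{0\uparrow}$ holds by the very definition of $\Para\Class{AC}^{0\uparrow}$. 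I expect the only mildly delicate point to be the endpoint constraint $s\mapsto v_0$, $t\mapsto v_\ell$, which is not provided verbatim by Theorem~\ref{theorem:embtw}; the layered formulation above simply encodes it into the base and acceptance conditions of the recursion, so no genuine obstacle remains.
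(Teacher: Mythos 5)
Your proof is correct, but it takes a more self-contained route than the paper. The paper's own proof is a two-line reduction: deciding whether the distance from $s$ to $t$ is within the parameter is phrased as embedding a path with its two endpoints marked so that they must be mapped to $s$ and~$t$, and this is then handed off to Theorem~\ref{theorem:embtw} together with the second remark following Corollary~\ref{corollary:subgraphAC} (the theorem also holds for relational structures), which is precisely what licenses the endpoint constraint you identified as the delicate point. You instead unfold that machinery for the special case of paths: you invoke Theorem~\ref{theorem:universalColoring} directly to get a $(|V_G|,\ell+1,\ell+1)$-universal coloring family and replace the generic bottom-up processing of a tree-decomposition by an explicit layered reachability circuit of depth $O(d)$, with the endpoint conditions built into the base and acceptance layers, plus a disjunction over the $d+1$ candidate lengths $\ell$ -- which in fact handles the ``at most $d$'' semantics more explicitly than the paper's one-liner does. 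The paper's route buys brevity and reuse of the embedding theorem; yours buys independence from it and a concrete circuit whose depth and \textsc{fpt} size bounds are visible at a glance. One further simplification is available to you: for the distance problem simplicity of the path is irrelevant, so the coloring is dispensable -- your layered computation without any colors (a bit per vertex meaning ``reachable from $s$ within $i$ steps'') already decides $\PLang[\mathit d]{distance}$ in depth $O(d)$ and polynomial size. This is exactly the even more elementary direct argument the paper alludes to in its parenthetical remark that the lemma ``can also be proved directly quite easily.''
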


A known fact from circuit complexity states that  a polynomial-sized
$\Class{AC}$-circuit that decides whether a given graph $G$ contains a path of
length at most $d$ between to vertices $s$ and $t$ requires
depth $\Omega(\log\log d)$~\cite{Beame:1998uo}. This  implies
$\PLang[\mathit d]{distance}\not\in\Para\Class{AC}^{0}$.

\begin{corollary}\label{corollary:paraACvsParaACup}
$\Para\Class{AC}^0\subsetneq\Para\Class{AC}^{0\uparrow}$.
\end{corollary}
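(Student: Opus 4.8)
The plan is to exhibit a single parameterized problem that separates the two classes, and the natural candidate is $\PLang[\mathit d]{distance}$, for which the two preceding results already do almost all of the work. First, the inclusion $\Para\Class{AC}^0 \subseteq \Para\Class{AC}^{0\uparrow}$ is immediate from the definitions: a constant depth bounding function $d(n,k) = \delta$ is a special case of a depth bounding function of the form $f(k)$ (take $f\equiv\delta$), so any witnessing circuit family for a problem in $\Para\Class{AC}^0$ already witnesses its membership in $\Para\Class{AC}^{0\uparrow}$. It remains to show that the inclusion is proper.

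For properness, Lemma~\ref{lem:distance} gives $\PLang[\mathit d]{distance} \in \Para\Class{AC}_{f(d)} \subseteq \Para\Class{AC}^{0\uparrow}$, so it suffices to argue that $\PLang[\mathit d]{distance} \notin \Para\Class{AC}^0$. Suppose, for the sake of contradiction, that it were. Then there would be a \textsc{dlogtime}-uniform family $(C_{n,d})_{n,d\in\mathbb N}$ of $\Class{AC}$-circuits of depth at most some absolute constant $\delta$ and size at most $g(d)\cdot n^c$ for a computable $g$ and a constant $c$, deciding for $n$-vertex graphs whether there is a path of length at most $d$ between the two designated vertices. Fixing any value of $d$, the subfamily $(C_{n,d})_n$ is then a family of $\Class{AC}$-circuits of depth $\delta$ and size polynomial in $n$ solving exactly the ``distance at most $d$'' problem.

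This is where the circuit lower bound of Beame et al.~\cite{Beame:1998uo} enters: any polynomial-size $\Class{AC}$-circuit deciding whether a graph contains a path of length at most $d$ between two given vertices must have depth $\Omega(\log\log d)$. Choosing $d$ large enough that this bound exceeds the fixed constant $\delta$ yields the desired contradiction, since $\delta$ does not depend on $d$. Hence $\PLang[\mathit d]{distance} \in \Para\Class{AC}^{0\uparrow} \setminus \Para\Class{AC}^0$, so the inclusion $\Para\Class{AC}^0 \subseteq \Para\Class{AC}^{0\uparrow}$ is strict. The only point requiring care is the interaction of the two quantifiers governing depth: membership in $\Para\Class{AC}^0$ demands one constant depth bound valid for all $n$ \emph{and} all $d$ simultaneously, whereas the lower bound forces the depth to grow (however slowly) with $d$; reconciling these is precisely what produces the separation.
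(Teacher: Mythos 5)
Your proof is correct and follows essentially the same route as the paper: membership of $\PLang[\mathit d]{distance}$ in $\Para\Class{AC}^{0\uparrow}$ via Lemma~\ref{lem:distance}, and non-membership in $\Para\Class{AC}^0$ via the $\Omega(\log\log d)$ depth lower bound of~\cite{Beame:1998uo}. Your explicit handling of the quantifier over the constant depth bound (fixing $d$ large enough to beat the fixed $\delta$) is exactly the intended argument, just spelled out in more detail than the paper does.
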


\subsection{First-Order Model Checking}

Our last result in this section on tools is an algorithmic
meta-theorem: We show that the model checking problem for first-order
logic on graphs of bounded degree lies in
$\Para\Class{AC}^{0\uparrow}$. We build strongly on a previous result by
Flum and Grohe~\cite{FlumG2002}, who showed that this model checking
problem lies in $\Para\Class L$, but differ in three regards: First,
we use color coding in our proof, which simplifies the
argument somewhat, second, we identify the parameterized distance
problem on bounded degree graphs as the only part of the computation
that is presumably not in $\Para\Class{AC}^0$, and, third, we observe
that the degree of the graphs can be made a parameter and need not be 
considered constant.

\begin{problem}[{$\PLang[\phi,\delta]{mc}(\Class{FO})$}]
  \begin{parameterizedproblem}
    \instance A logical structure $\mathcal A$ and a first-order formula $\phi$.
    \parameter The (size of) the formula $\phi$ and the maximum
    degree~$\delta$ of $\mathcal A$'s Gaifman graph.
    \question $\mathcal A \models \phi$?
  \end{parameterizedproblem}
\end{problem}

\begin{theorem}\label{thm:meta}
  $\PLang[\phi,\delta]{mc}(\Class{FO}) \in \Para\Class{AC}_{f(\phi+\delta)}\subseteq\Para\Class{AC}^{0\uparrow}$.
\end{theorem}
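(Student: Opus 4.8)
The plan is to adapt the classical Gaifman-locality argument of Flum and Grohe to the parallel setting. First, I would invoke Gaifman's theorem: every first-order sentence $\phi$ is equivalent to a Boolean combination of \emph{local sentences}, i.e.\ sentences of the form $\exists x_1\cdots\exists x_m\bigl(\bigwedge_{i<j}\mathrm{dist}(x_i,x_j)>2r \wedge \bigwedge_i \psi^{(r)}(x_i)\bigr)$, where $\psi^{(r)}$ is $r$-local around its free variable and $r$, $m$ depend only on $|\phi|$. Since $\phi$ is a parameter, this Gaifman normal form can be computed in a preprocessing step depending only on the parameter; the resulting Boolean combination over the local sentences can then be evaluated by a constant-depth circuit once we know the truth value of each local sentence. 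So the whole problem reduces to deciding local sentences of the above shape.

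For a single local sentence, the key observation is that in a graph of maximum degree~$\delta$, the $r$-neighborhood of any vertex has size at most $\delta^{O(r)}$, a bound depending only on the parameters. Hence, for each vertex $v$, whether $\mathcal A\models\psi^{(r)}(v)$ is a question about a structure of parameter-bounded size, which can be decided by a parameter-depth, $\mathrm{fpt}$-size circuit (brute force over the bounded neighborhood) — this part is even in $\Para\Class{AC}^0$ once the neighborhood is available. The remaining task is to pick $m$ \emph{scattered} vertices satisfying $\psi^{(r)}$, i.e.\ vertices pairwise at distance $>2r$. Here is where I would use color coding: compute an $(|A|, m, m)$-universal coloring family via Theorem~\ref{theorem:universalColoring}; for a coloring $\lambda$, it suffices to find, for each color $c\in\{1,\dots,m\}$, a vertex $v_c$ of color $c$ with $\psi^{(r)}(v_c)$ such that the chosen representatives are pairwise far apart. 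Universality guarantees that if a scattered solution exists, some coloring isolates it into distinct color classes; and checking "for some coloring, there is a compatible choice" is a large disjunction/conjunction of constant depth over the circuit that tests distances.

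The one ingredient that is not in $\Para\Class{AC}^0$ is verifying that two chosen vertices are at distance $>2r$: this is exactly a (bounded) distance query, and on bounded-degree graphs it is an instance of $\PLang[d]{distance}$, which by Lemma~\ref{lem:distance} lies in $\Para\Class{AC}_{f(d)}\subseteq\Para\Class{AC}^{0\uparrow}$ — note that here $d = 2r$ depends only on $|\phi|$. (Alternatively, since degree is bounded, the radius-$2r$ ball is parameter-sized and reachability within it can be done by a parameter-depth circuit directly.) Stacking these components — parameter-depth distance subcircuits feeding into constant-depth color-coding and Gaifman-combination layers — yields total depth $f(|\phi|+\delta)$ and work $f(|\phi|+\delta)\cdot|\mathcal A|^{O(1)}$, placing the problem in $\Para\Class{AC}_{f(\phi+\delta)}$.

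I expect the main obstacle to be bookkeeping rather than a deep idea: one must (i) confirm that Gaifman normal form is computable from $\phi$ alone (so its cost is absorbed into the parameter-dependent preprocessing and does not inflate the circuit depth beyond $f(|\phi|)$), and (ii) arrange the encoding so that the bounded neighborhoods, the distance tests, and the universal coloring family are all simultaneously accessible to a single \textsc{dlogtime}-uniform circuit family indexed by $(n,k)$ with $k = |\phi|+\delta$. The delicate point is uniformity of the composed family: the local-sentence subcircuits have a shape determined by $\phi$, which is part of the input, so one must either hard-wire a circuit for every $\phi$ up to the parameter bound and select among them (a parameter-sized but uniform selection), or argue that the generic "evaluate $\psi^{(r)}$ on a $\delta^{O(r)}$-element structure" circuit is uniform in $r$ and $\delta$. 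Either route works, but care is needed to keep the depth bounded by a function of the parameter only.
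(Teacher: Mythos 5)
Your overall skeleton (Gaifman normal form computed as parameter-only preprocessing, parameter-bounded $r$-balls in degree-$\delta$ graphs, distance queries via Lemma~\ref{lem:distance}, brute-force model checking of the parameter-sized neighborhoods) matches the paper's proof, but there is a genuine gap in the step where you apply color coding. You color only the $m$ witness vertices with an $(|A|,m,m)$-universal family and then say that per coloring one checks whether there is a ``compatible choice'' of one representative per color that is pairwise at distance $>2r$. That scatteredness condition is a \emph{joint} constraint coupling the choices across colors, so the coloring of the centers alone does not decouple anything: the obvious implementation of ``there is a compatible choice'' is a disjunction over all $m$-tuples of vertices, which has size $n^{m}$ with $m$ depending on the parameter and hence is not \textsc{fpt} work. (If such an enumeration were admissible you would not need color coding at all; the whole point of the technique here is to avoid exactly this blow-up.) As written, your circuit either exceeds the allowed size or never verifies scatteredness, so the selection step fails.

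The paper resolves this by coloring the \emph{entire balls} rather than the centers: it uses an $(|V|,Mk,k+1)$-universal family, where $M=\delta^{O(r)}$ bounds the ball size, so that for a yes-instance some coloring makes each $N_r(a_i)$ monochromatic of color~$i$. Then the check decomposes into independent per-color searches -- for each color $i$, does there exist a vertex $a_i$ whose whole $r$-ball has color $i$ (computed via Lemma~\ref{lem:distance} inside the color class, with the ball extracted as a small adjacency matrix using Lemma~\ref{lem:threshold}) and satisfies $G[N_r(a_i)]\models\psi(a_i)$ -- and disjointness of the balls (equivalently, pairwise distance $>2r$ of the centers) is automatic because distinct balls carry distinct colors. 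This decoupling is the missing idea; to repair your argument you would either have to adopt it, or replace the selection step by a different mechanism (e.g., a counting argument in the bounded-degree conflict graph showing that either there are so many vertices satisfying $\psi^{(r)}$ that $k$ scattered ones must exist, or only parameter-many candidates remain and can be enumerated), neither of which your proposal currently supplies.
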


\begin{proof}[Sketch of Proof]
  By Gaifman's Theorem~\cite{Gaifman1982} we can rewrite the given
  formula as a formula~$\phi'$ in Gaifman normal form. Thus, what
  essentially remains is to check whether the structure (which we can
  interpret as a graph) contains $k$ disjoint ``balls'' of size
  bounded in the parameter (due to the maximum degree of the
  underlying Gaifman graph) that satisfy the subformulas in
  $\phi'$. To find these substructures, we make use of color coding
  and apply Lemma~\ref{lem:distance} to compute the corresponding
  connecting components. Finally, we only have to model check the
  resulting parameter-sized substructures.
\end{proof}

We conclude with the remark that the depth of the circuits constructed
in the above theorem just depends on the degree $\delta$ of the graph
and on the radius~$r$ of the balls,
which measure how ``local'' the formula~$\phi$ is. The smallest $r$
for which $\phi$ can be rewritten as in the proof is known as the
\emph{locality rank} $\operatorname{lr}(\phi)$ and the proof actually
shows that $\PLang[\phi,\delta]{mc}(\Class{FO})
\in \Para\Class{AC}_{O(\delta^{\operatorname{lr}(\phi)})}$.

\section{Fast Parallel Fixed-Parameter Algorithms for Natural Problems}
\label{section:applications}
  
The tools we have developed are now applied to a number of natural
parameterized problems found in the literature.

\subparagraph*{Packing Problems.}

We have already pointed out that the parameterized matching problem
can be seen as an embedding problem, where the objective is to embed
the graph $H = kK_2$, consisting of $k$ disjoint copies of a single
edge, into a graph~$G$. Embedding multiple disjoint copies of the same
graph into another graph is also known as ``packing''. Clearly,
instead of edges we can also pack other things as long as taking any
number of copies of these ``other things'' still has bounded
tree-depth. For instance, we can try to ``pack'' $k$ different
triangles into~$G$, that is, we can check whether there are $k$
vertex-disjoint triangles in~$G$. Unlike the matching problem,
triangle packing is known to be $\Class{NP}$-complete.

\begin{theorem}\label{thm:tripack}
  $\PLang{triangle-packing} \in \Para\Class{AC}^0$.
\end{theorem}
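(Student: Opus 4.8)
The plan is to view triangle packing as an embedding problem and apply the machinery of the previous subsection. Specifically, ``packing $k$ vertex-disjoint triangles into $G$'' is exactly the question of whether the graph $H = kK_3$, the disjoint union of $k$ triangles, embeds (as a not-necessarily-induced subgraph) into $G$ via an injective homomorphism. Since each connected component of $H$ is a triangle on $3$ vertices, $H$ has tree-depth at most $3$ (each component has a trivial depth-$3$ tree-decomposition, and tree-depth of a disjoint union is the max over components). Thus $H$ lies in the class $\mathcal H$ of all graphs of tree-depth at most $3$, so by Corollary~\ref{corollary:subgraphAC} we have $\PLang{emb}(\mathcal H)\in\Para\Class{AC}^0$, and feeding $(H,G)$ with $H=kK_3$ to that algorithm decides triangle packing.

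First I would make precise the reduction from $\PLang{triangle-packing}$ to $\PLang{emb}(\mathcal H)$: on input $(G,k)$, the circuit must produce (a representation of) the graph $H=kK_3$ together with a depth-$3$ tree-decomposition of it. Since $k$ is the parameter and is given in unary, writing down $kK_3$ and its canonical tree-decomposition is a trivial $\Class{FO}$-computable (indeed $\Para\Class{AC}^0$) preprocessing step; this is exactly the ``preprocessing'' already invoked in the paragraph preceding Corollary~\ref{corollary:subgraphACk}. Then I would simply compose this preprocessing with the $\Para\Class{AC}^0$ circuit family for $\PLang{emb}(\mathcal H)$. Composition is harmless here because the parameter of the embedding instance, namely $|V_H| = 3k$, depends only on the original parameter $k$, so the depth stays constant and the work stays of the form $f(k)\cdot|V_G|^{O(1)} = f(k)\cdot n^{O(1)}$ (the exponent being governed by the constant tree-width $2$ of a triangle).

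The one point that deserves a sentence of care — and the closest thing to an obstacle — is that the parameterization must be $\Class{FO}$-computable: given $(G,k)$ we need $k$ (equivalently $|H|$) extractable by a constant-depth circuit, which is immediate since $k$ is part of the input. There is no genuine difficulty beyond bookkeeping: the real content is entirely in Theorem~\ref{theorem:embtw} and Corollary~\ref{corollary:subgraphAC}, and triangle packing is just the instantiation $H = kK_3$. For the write-up I would state the reduction, note the tree-depth bound $\operatorname{td}(kK_3)\le 3$, invoke Corollary~\ref{corollary:subgraphAC}, and remark that the same argument shows $\PLang{$F$-packing}\in\Para\Class{AC}^0$ for any fixed graph $F$ (and more generally whenever the components being packed have bounded tree-depth), which is the stronger statement already advertised in the introduction.
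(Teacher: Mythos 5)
Your proposal is correct and follows exactly the paper's own argument: observe that $H = kK_3$ has tree-depth at most~$3$ and invoke Corollary~\ref{corollary:subgraphAC}, with the construction of $H$ from the unary parameter being a trivial preprocessing step. The additional bookkeeping you supply about composing the reduction with the embedding circuit is sound but not a departure from the paper's route.
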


\begin{proof}
  Just observe that a graph $H$ consisting of any number of disjoint
  copies of a triangle has tree-depth~$3$. The claim follows from
  Corollary~\ref{corollary:subgraphAC}. 
\end{proof}

Indeed, for any fixed graph $H_0$ the packing problem
$\PLang{$H_0$-packing}$ lies in $\Para\Class{AC}^0$, where the 
question is whether we can find $k$ disjoint copies of $H_0$ in~$G$
and $k$ is the parameter:

\begin{theorem}\label{thm:h0packing}
  $\PLang{$H_0$-packing} \in \Para\Class{AC}^0$ for every fixed
  graph $H_0$.
\end{theorem}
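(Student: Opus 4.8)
The plan is to reduce $\PLang{$H_0$-packing}$ to $\PLang{emb}(\mathcal H)$ for a suitable class $\mathcal H$ and then invoke Corollary~\ref{corollary:subgraphAC}. Given an instance $(G,k)$ of $\PLang{$H_0$-packing}$, the natural target graph is $H = k H_0$, the disjoint union of $k$ copies of $H_0$. An injective homomorphism $\phi\colon V_H\to V_G$ is then precisely a choice of $k$ vertex-disjoint subgraphs of $G$, each isomorphic to a (not necessarily induced) copy of $H_0$. Since $H_0$ is a \emph{fixed} graph, its number of vertices is a constant, and to make the reduction go through I would actually want $H_0$-packing to mean packing \emph{induced} copies or, equivalently, observe that $H_0$ being fixed lets us afford the slightly larger class of all graphs on $|V(H_0)|$ vertices; the key point either way is that no matter how large $k$ gets, the tree-depth of $kH_0$ stays bounded.

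The main technical claim is therefore that $\operatorname{td}(kH_0)$ is bounded by a constant depending only on $H_0$. This follows from two elementary facts about tree-depth: first, tree-depth is monotone and in particular $\operatorname{td}(H_0)\le |V(H_0)|$, a constant; second, tree-depth of a disjoint union is the maximum of the tree-depths of the components, $\operatorname{td}(G_1\cup G_2) = \max\{\operatorname{td}(G_1),\operatorname{td}(G_2)\}$, which is immediate from the definition (take the forest that is the disjoint union of the optimal elimination forests of the components). Hence $\operatorname{td}(kH_0) = \operatorname{td}(H_0) \le |V(H_0)| =: d$, a constant independent of $k$. So $kH_0$ lies in the class $\mathcal H$ of all graphs of tree-depth at most $d$.

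With this in hand, consider the map $(G,k)\mapsto (kH_0, G)$ sending a $\PLang{$H_0$-packing}$ instance to an $\PLang{emb}(\mathcal H)$ instance. This map is plainly computable by a \textsc{dlogtime}-uniform constant-depth circuit: producing the adjacency information of $kH_0$ only requires, for each pair of vertices, deciding whether they lie in the same block of size $|V(H_0)|$ and, if so, looking up the (constant-size) adjacency table of $H_0$ — these are arithmetic comparisons on unary-encoded indices, exactly the kind of operation available in $\Class{AC}^0$. Moreover $k$ need not appear as the parameter of the target instance in any awkward way: the parameter of the $\PLang{emb}$ instance is the graph $H = kH_0$ itself, and $\PLang{emb}(\mathcal H)\in\Para\Class{AC}^0$ by Corollary~\ref{corollary:subgraphAC}. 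Composing an $\Class{AC}^0$ reduction with a $\Para\Class{AC}^0$ algorithm stays in $\Para\Class{AC}^0$ (one simply inlines the reduction circuit as the first few layers), which gives $\PLang{$H_0$-packing}\in\Para\Class{AC}^0$.

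The part that requires a little care — and is really the only obstacle — is the bookkeeping around uniformity and parameterization: one must check that the parameter of the constructed instance, namely $kH_0$, is itself $\Class{FO}$-computable from $(G,k)$ (it is, for the reasons just given) and that feeding a parameter that grows with $k$ into the $\Para\Class{AC}^0$ family $(C_{n,\kappa})$ of Corollary~\ref{corollary:subgraphAC} is legitimate — it is, because that corollary's bound depends only on the tree-depth $d$, which is fixed, so the depth stays constant while the size stays $f(|H|)\cdot n^{O(1)} = f(k)\cdot n^{O(1)}$, i.e.\ \textsc{fpt} work. Everything else is routine.
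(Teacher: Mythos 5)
Your proof is correct and takes essentially the same route as the paper: just as in its triangle-packing argument, the paper observes that $kH_0$ has tree-depth bounded by the constant $|V(H_0)|$ (tree-depth of a disjoint union being the maximum over components) and then applies Corollary~\ref{corollary:subgraphAC}. One small remark: your hedge about induced copies is unnecessary, since the injective-homomorphism formulation of $\PLang{emb}(\mathcal H)$ captures exactly the standard (subgraph) packing problem, so the reduction to embedding $kH_0$ goes through as is.
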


Further variants arise when, instead of a single graph $H_0$, we are
given a whole multiset of graphs as inputs and we must find disjoint
copies of all of them in~$G$. Again, as long as there is a fixed bound
on the size of the graphs, the tree-depth of their disjoint union is
bounded and, hence, the packing problem lies in $\Para\Class{AC}^0$.

The complexity of packing problem changes when the to-be-packed graphs
no longer have constant size as in the following problem:

\begin{problem}[{$\PLang[\mathit{k,l}]{cycle-packing}$}]
  \begin{parameterizedproblem}
    \instance An undirected graph $G$ and two numbers $k$ and $l$.
    \parameter $k$ and $l$
    \question Are there $k$ vertex-disjoint cycles in $G$, each having length~$l$?
  \end{parameterizedproblem}
\end{problem}

The graph $H = k C_l$ consisting of $k$ copies of a cycle of
length~$l$ no longer has bounded tree-depth; it does have
tree-width~2, however. Thus, by Theorem~\ref{theorem:embtw} we get:

\begin{theorem}\label{thm:cycpack}
  $\PLang[\mathit{k,l}]{cycle-packing} \in \Para\Class{AC}_{f(k+l)}\subseteq\Para\Class{AC}^{0\uparrow}$.
\end{theorem}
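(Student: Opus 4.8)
The plan is to cast the cycle-packing question as an instance of the embedding problem $\PLang{emb}(\mathcal H)$ and then invoke Theorem~\ref{theorem:embtw}. Given an undirected graph $G$ and numbers $k$ and $l$, form the graph $H = kC_l$, the disjoint union of $k$ cycles each of length $l$. Then $G$ contains $k$ vertex-disjoint cycles of length exactly $l$ if and only if there is an injective homomorphism $\phi\colon V_H \to V_G$: injectivity guarantees the copies are vertex-disjoint, and the fact that $H$ is a disjoint union of $l$-cycles guarantees each image is an $l$-cycle. (A small point to remark on: an injective homomorphism maps each $C_l$ to a closed walk of length $l$ that visits $l$ distinct vertices, hence to a genuine $l$-cycle in $G$; so ``subgraph isomorphic to $kC_l$'' is exactly ``$k$ vertex-disjoint $l$-cycles.'')

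The key structural observation is that $H = kC_l$ has tree-width $2$, uniformly in $k$ and $l$. Indeed, a single cycle $C_l$ has a tree-decomposition of width $2$ (take a path of bags, each containing one fixed vertex of the cycle together with two consecutive other vertices), and the disjoint union of graphs of tree-width $\le 2$ again has tree-width $\le 2$ — simply string the individual decompositions together into one tree. Crucially, while the \emph{depth} of this decomposition grows with $l$, its \emph{width} is the constant $2$. So in a preprocessing step, working from the parameters $k$ and $l$ alone (which are given in unary by the $\Class{FO}$-computability convention), we construct such a tree-decomposition $(T,\iota)$ of $H$ with $\mathrm{width}(T) = 2$; this construction is easily seen to be doable in constant depth.

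Now apply Theorem~\ref{theorem:embtw} to $H$, $G$, and $(T,\iota)$: an embedding of $H$ into $G$, if one exists, is computed by an $\Class{AC}$-circuit of depth $O(\mathrm{depth}(T))$ and size $f(|V_H|)\cdot O(|V_G|^{\mathrm{width}(T)}) = f(|V_H|)\cdot O(|V_G|^{2})$. Since $|V_H| = kl$ and $\mathrm{depth}(T)$ is bounded by some function $g(k,l)$, the depth is a function of the parameters only and the size is of the form $f'(k+l)\cdot n^{O(1)}$, i.e.\ $\Class{AC}$-work. Checking whether the circuit actually produced a valid embedding (versus reporting failure) adds only constant depth. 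Hence $\PLang[\mathit{k,l}]{cycle-packing} \in \Para\Class{AC}_{f(k+l)} \subseteq \Para\Class{AC}^{0\uparrow}$.

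The only mild obstacle is the bookkeeping around uniformity: one must check that the map $(k,l)\mapsto (T,\iota)$ is itself $\Class{AC}$-computable (indeed $\Class{FO}$-computable) so that the whole family remains \textsc{dlogtime}-uniform, and that feeding this decomposition into the circuit family of Theorem~\ref{theorem:embtw} preserves uniformity. This is routine given that $k$ and $l$ are presented in unary and the decomposition has a completely regular, explicitly described shape, so no real difficulty arises.
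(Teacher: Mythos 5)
Your proposal is correct and follows essentially the same route as the paper: observe that $H = kC_l$ has tree-width~$2$ (though unbounded tree-depth), compute a width-$2$ tree-decomposition from the parameters in a preprocessing step, and apply Theorem~\ref{theorem:embtw} so that the circuit depth depends only on $k$ and $l$ while the size stays of the form $f(k+l)\cdot n^{O(1)}$. The extra remarks on the equivalence with the embedding formulation and on uniformity are fine but not points where the paper's argument differs.
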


The same result obviously also holds for
$\PLang[\mathit{k,l}]{path-packing}$ and it also holds for
$\PLang{forest-packing}$, where we are given a forest as input and  
the parameter is the total numbers of vertices in it.
We conclude with the remark that these ideas cannot be extended to
packing graphs whose tree-width is not bounded: Already embedding
cliques, let alone packing them, is $\Class W[1]$-hard.

\subparagraph*{Covering Problems.}

In \emph{covering problems} we must choose vertices in a graph (or
sometimes hypergraph) such that all ($\PLang{vertex-cover}$) or some
($\PLang{partial-vertex-cover}$) of the edges are ``covered,'' that is, they
intersect with the set of chosen vertices. The best-known covering
problem is undoubtedly $\PLang{vertex-cover}$, whose complexity has been 
scrutinized extensively in parameterized complexity
theory. We now prove $\PLang{vertex-cover} \in \Para\Class{AC}^0$; a
fact that nicely reflects on a theoretical basis the ``empirical''
observation that $\PLang{vertex-cover}$ is one of the ``easiest''
parameterized problems. The problem was one of the first
shown to lie in $\Para\Class P$, was then shown to lie in $\Para\Class
L$ by Cai  et~al.\ \cite{CaiCDF1997}, then in $\Para\Class{TC}^0$ by
Elberfeld and the last two authors~\cite{ElberfeldST2014}. 

\begin{theorem}\label{thm:vc in ac0}
  $\PLang{vertex-cover}\in\Para\Class{AC}^0$.
\end{theorem}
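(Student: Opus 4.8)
The plan is to combine color coding with the classical observation that a vertex cover of size $k$ has a very small closed neighborhood structure. Recall the standard reduction rule: if a vertex $v$ has degree greater than $k$, then $v$ must lie in every vertex cover of size~$k$. More importantly, if $G$ has a vertex cover of size at most~$k$, then after removing isolated vertices, $G$ has at most $k^2 + k$ vertices and at most $k^2$ edges (each of the $k$ cover vertices covers at most $k$ edges when we have already deleted high-degree vertices, but more crudely: every edge is incident to one of the $\le k$ cover vertices, and each cover vertex has degree $\le k$ after the high-degree vertices are pushed into the cover). So the strategy is: push all high-degree vertices into the cover, check that not too many were pushed, and then color-code the small remaining instance.

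First I would, for each vertex $v$, compute its degree and compare it with~$k$; this is $\PLang{threshold}$-style counting, which by Lemma~\ref{lem:threshold} lies in $\Para\Class{AC}^0$ (we only ever need to test ``degree~$>k$,'' a parameter-bounded threshold). Let $B$ be the set of vertices of degree~$>k$. If $|B| > k$, reject — again a parameter-bounded threshold test. Otherwise set $k' = k - |B|$ and let $G'$ be $G$ with the vertices of $B$ deleted; now $G'$ has maximum degree~$\le k$, so if $G'$ has a vertex cover of size~$\le k'$ then (after deleting isolated vertices) $G'$ has at most $k'k \le k^2$ edges and at most $2k^2$ non-isolated vertices. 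Conversely $G$ has a vertex cover of size~$\le k$ iff $G'$ has one of size~$\le k'$. Now apply a $(|V_{G'}|, 2k^2, 2k^2)$-universal coloring family from Theorem~\ref{theorem:universalColoring}: for each coloring $\lambda$ in the family and each subset $C$ of the $2k^2$ color classes of total size matching a prospective cover of size $\le k'$, check in parallel whether the union of those color classes is a vertex cover of $G'$; since each such candidate set has parameter-bounded size and we only need to check that every edge of $G'$ is incident to it, this is an $\Para\Class{AC}^0$ test. The universal coloring property guarantees that if a cover of size~$\le k'$ exists, some $\lambda$ isolates its (at most~$k'$) vertices into distinct color classes, so some candidate $C$ is exactly that cover; the circuit accepts iff some $(\lambda,C)$ pair works.

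The size and depth bookkeeping is routine: the universal coloring family has size $f(k)\cdot n^{O(1)}$ and constant depth by Theorem~\ref{theorem:universalColoring}; there are $f(k)$ subsets $C$ to try per coloring; all the degree and cover tests are parameter-bounded thresholds in constant depth; and the whole circuit is the OR over all $(\lambda, C)$ of an $\Para\Class{AC}^0$ predicate, hence still constant depth and $f(k)\cdot n^{O(1)}$ size. Uniformity follows because every component — computing $B$, computing $G'$, the universal coloring circuits, enumerating the subsets $C$ — is $\Class{FO}$-describable with the relevant numbers encoded in unary. The main obstacle is making sure the ``shrink to a $\mathrm{poly}(k)$-size kernel'' step is genuinely parallelizable rather than the sequential iterated-reduction it usually is: the point is that pushing all high-degree vertices into the cover is a \emph{single} parallel operation (all degrees are computed simultaneously), not an iterated rule, so no sequential dependency arises. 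The one subtlety to check is that after deleting $B$ we do not need to re-examine degrees — but we do not, because we have already committed to the full structure and are only color-coding the at-most-$k'$ remaining cover vertices among the (now bounded) non-isolated vertices of~$G'$; any vertex of $G'$ that is non-isolated but not needed simply never gets chosen, and isolated vertices of $G'$ are irrelevant to covering.
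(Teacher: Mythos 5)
Your first phase -- pushing all vertices of degree greater than $k$ into the cover in one parallel step, rejecting if more than $k$ of them exist, and handling all counting via the parameter-bounded thresholds of Lemma~\ref{lem:threshold} -- is exactly the route the paper takes (a parallelized Buss kernelization). However, your final step contains a genuine gap. You claim that for a $(|V_{G'}|,2k^2,2k^2)$-universal coloring family, if $G'$ has a cover $W'$ of size at most $k'$, then ``some $\lambda$ isolates its vertices into distinct color classes, so some candidate $C$ is exactly that cover.'' The universal coloring property only guarantees that the vertices \emph{of $W'$} receive prescribed (say, pairwise distinct) colors; it gives no control whatsoever over the colors of the other $|V_{G'}|-|W'|$ vertices. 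Since the color classes partition all of $V_{G'}$ into at most $2k^2$ parts, for large $n$ every class containing a cover vertex also contains many unrelated vertices (in the explicit construction of Theorem~\ref{theorem:universalColoring} every nonempty class has size roughly $n/(k^2\log n)$). Hence the union of the classes you select never has size at most $k'$, your size test fails for every $(\lambda,C)$, and the circuit rejects yes-instances once $n$ is large: completeness breaks. Dropping the size test instead breaks soundness, since a union of whole color classes being a vertex cover says nothing about a cover of size $k'$.

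The paper avoids this entirely: after the high-degree vertices are removed, it checks (again a parameter-bounded threshold) that at most $k(k+1)$ non-isolated vertices remain, computes the induced subgraph on these at most $k(k+1)$ vertices -- this indexing is where Lemma~\ref{lem:threshold} does the real work -- and then solves that parameter-sized kernel directly in constant depth and $f(k)$ size; color coding enters only indirectly through Lemma~\ref{lem:threshold}. Your argument can be repaired along either of two lines: (a) follow the paper and brute-force the explicitly computed kernel, or (b) if you want to keep the coloring step, first verify via Lemma~\ref{lem:threshold} that $G'$ has at most $2k^2$ non-isolated vertices (rejecting otherwise), take $S$ to be the set of \emph{all} non-isolated vertices of $G'$ so that some $\lambda$ gives them pairwise distinct colors, and intersect each chosen color class with the non-isolated vertices before testing size and coverage; then each class contributes at most one relevant vertex and your candidate sets really are parameter-bounded.
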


\emph{Partial} covering problems ask us not to cover all edges,
but only~$t$ of them: 

\begin{problem}[{$\PLang[\mathit{k,t}]{partial-vertex-cover}$}]
  \begin{parameterizedproblem}
    \instance An undirected graph $G=(V,E)$ and two numbers $k$ and $t$.
    \parameter $k$, $t$
    \question Is there a set $S\subseteq V$ of cardinality $|S|$ at
    most~$k$ such that the cardinality of $\bigl\{\{u,v\} \in E\mid u \in S \lor
    v\in S\bigr\}$ is at least~$t$?
  \end{parameterizedproblem}
\end{problem}
Another version is $\PLang[\mathit{t}]{exact-partial-vertex-cover}$,
where the size of $S$ is no longer restricted, but the cardinality of
$\bigl\{\{u,v\} \in E\mid u \in S \lor v\in S\bigr\}$ must be \emph{exactly}~$t$.

These problems, which are generally considered to be
harder than the plain vertex cover problem, lie in the class
$\Para\Class{AC}^{0\uparrow}$. Our proofs make an interesting use of
Theorem~\ref{thm:meta}. Recall that this ``meta-theorem'' states that all
first-order properties of graphs, parameterized by the first-order
property and the maximum degree of the graph, can be decided in
$\Para\Class{AC}^{0\uparrow}$. Covering properties can be expressed using
first-order formulas -- but we make \emph{no} requirement concerning
the degree of the input graph. The trick is to first reduce the inputs to
graphs of bounded degree and then apply the meta theorem. Such
a two-step approach is typically in advanced applications of
algorithmic meta-theorems.

\begin{theorem}\label{thm:pvc}
  $\PLang[\mathit{k,t}]{partial-vertex-cover} \in \Para\Class{AC}_{f(k+t)}\subseteq\Para\Class{AC}^{0\uparrow}$.
\end{theorem}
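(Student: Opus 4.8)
The plan is to reduce $\PLang[\mathit{k,t}]{partial-vertex-cover}$ to a first-order model checking instance on a bounded-degree graph and then invoke Theorem~\ref{thm:meta}. The first step is the observation that if a vertex $v$ has degree at least $t$, then choosing $S = \{v\}$ already covers $t$ edges, so the instance is trivially a yes-instance (assuming $k \geq 1$). Hence we may first check, in $\Para\Class{AC}^0$, whether any vertex has degree $\geq t$; if so, accept. Otherwise every vertex has degree less than $t$, so the graph has maximum degree bounded by the parameter $t$.

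The second step is to express the remaining question by a first-order formula whose size depends only on $k$ and $t$. On a graph of degree less than $t$, any set $S$ of size at most $k$ that covers at least $t$ edges may as well consist of vertices that each have at least one incident edge, and the total number of edges incident to $S$ is at most $k(t-1)$, a number bounded in the parameters. So we can write a formula $\phi_{k,t}$ that existentially quantifies $k$ vertices $x_1,\dots,x_k$ (allowing repetitions or dummy isolated vertices to handle $|S| < k$) and then asserts that among the $\le k(t-1)$ edges incident to them, at least $t$ are distinct. Counting distinct edges incident to a bounded set of vertices in a bounded-degree graph is itself a bounded-size first-order condition: one enumerates, for each $x_i$, its (at most $t-1$) neighbours via existential quantifiers, and then checks with a fixed Boolean combination that at least $t$ of the resulting pairs $\{x_i, y\}$ are pairwise distinct. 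The size of $\phi_{k,t}$ is some computable function $f(k+t)$, and crucially it does not depend on $|G|$.

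The third step is to assemble the reduction: on input $(G, k, t)$ we produce the pair $(G', \phi_{k,t})$ where $G'$ is $G$ with the degree-$\geq t$ case already handled, so $G'$ has maximum Gaifman degree $\delta < t$. This transformation is $\Class{FO}$-computable (indeed $\Para\Class{AC}^0$-computable), and the parameter $\phi_{k,t} + \delta$ is bounded by a computable function of $k + t$. Applying Theorem~\ref{thm:meta} gives $\PLang[\mathit{k,t}]{partial-vertex-cover} \in \Para\Class{AC}_{g(k+t)} \subseteq \Para\Class{AC}^{0\uparrow}$ for a suitable computable $g$, since composing an $\Para\Class{AC}^0$ preprocessing step with a $\Para\Class{AC}_{f(\cdot)}$ circuit yields a circuit whose depth still depends only on the parameter and whose work is still $\Class{FPT}$.

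The main obstacle I anticipate is getting the first-order formula exactly right so that it captures ``at least $t$ distinct covered edges'' rather than ``$t$ incident vertex-edge pairs'', which over-counts. The cleanest way around this is precisely to exploit the bounded degree: name all neighbours of each candidate vertex explicitly (a bounded list because $\delta < t$), which turns edge-distinctness into a finite equality/inequality check among a bounded set of named endpoints. A secondary subtlety is handling $|S| < k$ and the pathological small cases ($t = 0$, or $k = 0$), but these are disposed of by trivial $\Para\Class{AC}^0$ pre-checks. One then notes that $\PLang[\mathit{t}]{exact-partial-vertex-cover}$ is handled analogously, replacing ``at least $t$'' by ``exactly $t$'' in $\phi$ and dropping the bound on $|S|$ — here one additionally observes that we never need more than $t$ vertices in $S$ without loss of generality, restoring a parameter-bounded quantifier block.
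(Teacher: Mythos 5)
Your proposal is correct and follows essentially the same route as the paper's proof: dispose of any vertex of degree at least $t$ as a trivial yes-instance, observe that the remaining graph has degree bounded by the parameter, and apply Theorem~\ref{thm:meta} to a first-order formula of size depending only on $k$ and $t$. The only (cosmetic) difference is in how the formula expresses ``at least $t$ distinct covered edges'': the paper simply quantifies the $t$ edges as pairs $a_i,b_i$ and asserts via a standard subformula that the sets $\{a_i,b_i\}$ are distinct, each is an edge, and each $a_i$ equals some $x_j$, whereas you enumerate neighbourhoods using the degree bound, which works but is not needed for the formula itself.
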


\begin{theorem}\label{thm:epvc}
  $\PLang[\mathit{t}]{exact-partial-vertex-cover} \in \Para\Class{AC}_{f(t)}\subseteq\Para\Class{AC}^{0\uparrow}$.
\end{theorem}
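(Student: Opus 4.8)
The plan is to use the two-step recipe announced for the covering problems --- first reduce, in $\Para\Class{AC}^0$, to an instance whose graph has maximum degree bounded by the parameter, then invoke the model-checking meta-theorem (Theorem~\ref{thm:meta}) --- so I begin by normalising solutions. Suppose $S\subseteq V$ covers exactly $t$ edges and let $F$ be that set of $t$ covered edges. Replacing $S$ by $S\cap V(F)$, where $V(F)$ denotes the set of endpoints of edges of $F$, covers exactly the same set of edges (any covered edge has an endpoint in $S\cap V(F)$), so we may assume $S\subseteq V(F)$ and hence $|S|\le 2t$. Moreover every $v\in S$ is incident only to edges of $F$ --- an edge at $v$ outside $F$ would be covered by $S$, contradicting that $F$ is precisely the covered set --- so $\deg_G(v)\le|F|=t$. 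Thus a yes-instance always admits a witness $S$ with $|S|\le 2t$ lying inside the set $L:=\{v\in V\mid\deg_G(v)\le t\}$ of ``light'' vertices, with all covered edges having a light endpoint.

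This dictates the reduction. Put $H:=V\setminus L$ and build $G'$ from $G$ by keeping the induced subgraph $G[L]$ and attaching to each light vertex $v$ exactly $h_v:=|\{w\in H\mid\{v,w\}\in E\}|$ fresh pendant vertices, all tagged by a unary predicate so that a formula can distinguish them from genuine vertices. Since $v\in L$ implies $\deg_{G[L]}(v)+h_v=\deg_G(v)\le t$, and pendants have degree $1$, the graph $G'$ has maximum degree at most $t$. Deciding membership in $L$ and computing each $h_v$ (which is at most $t$) needs only a constant number of nested threshold tests, which lie in $\Para\Class{AC}^0$ by Lemma~\ref{lem:threshold}, so the whole reduction is a constant-depth, \textsc{fpt}-work computation with polynomial size blow-up. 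For correctness one checks that $(G,t)$ is a yes-instance iff $G'$ has a set $S'$ of non-pendant vertices covering exactly $t$ edges: the normalised witness $S\subseteq L$ works in $G'$ because for each $v\in S$ the $h_v$ edges it covers into $H$ account for exactly the $h_v$ pendant edges now at $v$, so the total count is preserved; conversely, any non-pendant witness in $G'$ is automatically contained in $L$ and covers the very same number of edges when viewed inside $G$.

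Finally, ``there is a set of non-pendant vertices covering exactly $t$ edges'' is expressible by a first-order formula $\phi_t$ whose size depends only on $t$: quantify over $2t$ (possibly repeated) non-pendant vertices $x_1,\dots,x_{2t}$, quantify over $t$ further vertex pairs naming the covered edges, and assert that those $t$ edges are pairwise distinct, each incident to some $x_i$, and that every edge incident to some $x_i$ is among them. Applying Theorem~\ref{thm:meta} to $G'$ (whose Gaifman degree $\delta$ satisfies $\delta\le t$) with $\phi_t$ places the problem in $\Para\Class{AC}_{f(\phi_t+\delta)}=\Para\Class{AC}_{f(t)}$; prepending the constant-depth reduction keeps it in $\Para\Class{AC}_{f(t)}\subseteq\Para\Class{AC}^{0\uparrow}$. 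The degenerate case $t=0$ is trivially a yes-instance via $S=\emptyset$.

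The step I expect to be the main obstacle is making the reduction respect \emph{exactness}, which is what separates this problem from $\PLang[\mathit{k,t}]{partial-vertex-cover}$: for the ``$\ge t$'' version one may simply blow up every high-degree vertex and answer ``yes'' whenever a vertex of degree $\ge t$ exists, whereas here the pendant gadget must faithfully simulate the edges into $H$ \emph{without} opening up new, cheaper ways to cover a prescribed number of edges --- which is why pendants must be barred from the cover in $\phi_t$ and why the normalisation lemma confining witnesses to $L$ is indispensable. Getting the bookkeeping inside $\phi_t$ exactly right (distinctness of the named edges, the repeated $x_i$ for covers of size below $2t$, and the universally quantified ``nothing else is covered'' clause) is the other fiddly, though routine, ingredient.
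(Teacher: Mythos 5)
Your proof is correct and follows essentially the same route as the paper: preprocess in constant depth so that high-degree vertices disappear in favour of tagged degree-one gadget vertices (your pendants are exactly the paper's ``red'' copies obtained by splitting each vertex of degree $>t$), then apply Theorem~\ref{thm:meta} to a formula that bars the tagged vertices from the cover and asserts exactly $t$ covered edges. Your explicit normalisation bounding the witness by $2t$ light vertices just makes precise a point the paper leaves implicit, so no substantive difference remains.
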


We conclude with the remark that the above results on finding vertex
coverings for graphs cannot easily be extended to hypergraphs since
for hypergraphs covering problems are typically hard for at least
$\Class W[1]$. 

\subparagraph*{Clustering Problems.}
\label{section:chromatic}

Clustering algorithms have a wide variety of applications, for
example in computational biology where we want to cluster genes and
proteins or process transcription data~\cite{BockerB2013}. A basic
clustering problem for graphs is the following:

\begin{problem}[{$\PLang[\mathit{k,\ell}]{cluster-editing}$}]
  \begin{parameterizedproblem}
    \instance An undirected graph $G=(V,E)$ and a numbers $\ell$
    and~$k$. 
    \parameter $\ell$, $k$
    \question Can we add and\,/\,or delete up to $k$~edges to or from
    $G$ such that the resulting graph consists of $\ell$ connected
    components, each of which is a clique?       
  \end{parameterizedproblem}
\end{problem}
A variant is $\PLang[\mathit k]{many-cluster-editing}$, where we just require
that the edited graph consists of cliques and do not prescribe the
number of clusters beforehand. This variant has been extensively
studied, most notably by Gramm et al.~\cite{GrammGHN2003} and
B\"ocker~\cite{Bocker2011} who showed its fixed-parameter
tractability. 
For the first version, algorithms based on color coding result in
reasonable running times, but where recently be outperformed by other
approaches~\cite{FominKPPV2011}. However, using a color coding
approach is useful when we consider parallel algorithms:

\begin{theorem}\label{theorem:clusterell}
  $\PLang[\mathit{k,\ell}]{cluster-editing}\in\Para\Class{AC}^0$. 
\end{theorem}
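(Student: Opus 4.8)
The plan is to reduce the cluster-editing question to a bounded-size combinatorial certificate whose existence can be verified by a constant-depth circuit, using the universal coloring families of Theorem~\ref{theorem:universalColoring} to "guess" the relevant vertices in parallel. The key structural observation is the standard one underlying color-coding approaches to cluster editing: if $G$ can be turned into $\ell$ cliques by at most $k$ edge modifications, then all but at most $2k$ vertices already lie in "clean" clusters, i.e., in connected components of $G$ that are already cliques and are not touched by any edit. More precisely, call a vertex $v$ \emph{affected} if it is an endpoint of some added or deleted edge in a solution; there are at most $2k$ affected vertices. Every unaffected vertex keeps exactly its $G$-neighbourhood, so the connected component of $G$ containing it must already be a clique in $G$, and this clique is one of the $\ell$ output clusters. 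Hence a solution is completely described by (i) the set $A$ of at most $2k$ affected vertices, (ii) a partition of $A$ into at most $\ell$ parts, and (iii) for each already-clique component $K$ of $G\setminus A$, an assignment to one of the $\ell$ parts (with the constraint that $K$ must be assigned to a part that ends up being a clique, which forces $K$'s vertices to become fully adjacent to that part). The number of edits incurred is then determined: within each part we must add all missing edges and delete all edges of $G$ going between distinct parts or between a clique-component and a non-matching part.

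First I would compute, using Theorem~\ref{theorem:universalColoring}, an $(n,2k,2k)$-universal coloring family $\Lambda$ of size $f(k)\cdot n^{O(1)}$ in constant depth. The circuit then branches over all $\lambda\in\Lambda$ in parallel. For a fixed $\lambda$, a colouring is interpreted as a proposed affected set together with its partition: colour class $i$ names the vertices put into output-part $i$ (we only need $2k$ colours since $|A|\le 2k$ and there are at most $2k$ nonempty parts among affected vertices, the remaining parts being filled only by clean clique-components). For the non-affected part of the graph, I would in parallel verify, for every connected component $C$ of $G$ minus the coloured vertices, that $C$ is already a clique — this is a first-order / constant-depth check once we note $C$ is just "the set of vertices reachable from $v$ without passing through a coloured vertex," but reachability is not in $\Class{AC}^0$ in general. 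The fix is that we do not actually need to compute components of $G\setminus A$: a non-affected vertex lies in the same output cluster as all of its $G$-neighbours, so I instead guess, again via the colouring, which of the $\ell$ parts each clean region attaches to, or equivalently: I extend the colouring to \emph{all} vertices (a $(n,2k,\ell+2k)$-universal family, say), where the colour of an unaffected vertex names its output cluster, and then the consistency condition "every unaffected vertex has the same colour as all its neighbours, and its colour class restricted to unaffected vertices forms a clique already present in $G$" is a purely local, constant-depth condition. Finally, for the fixed colouring, the total number of required edge additions and deletions is a sum over pairs of vertices of a locally-decidable $\{0,1\}$ indicator, and checking that this sum is at most $k$ is exactly an instance of $\PLang{threshold}$, which lies in $\Para\Class{AC}^0$ by Lemma~\ref{lem:threshold}. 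The circuit accepts iff some $\lambda\in\Lambda$ passes all checks.

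The correctness argument has two directions. If a solution exists, take its affected set $A$ (size $\le 2k$) and the induced partition of $V$ into the $\ell$ output clusters; the universality of $\Lambda$ guarantees some $\lambda$ realises this colouring on $A$ (and can be made to realise the cluster-assignment on the rest), and all checks pass by the structural discussion above. Conversely, any colouring passing all checks yields an explicit partition of $V$ into $\le\ell$ cliques whose symmetric difference with $E(G)$ has size $\le k$, i.e., a genuine solution. The main obstacle I anticipate is precisely the point flagged above: naively one wants to talk about "connected components of $G$ after removing the affected vertices," and connectivity/reachability is not an $\Class{AC}^0$-computable notion, so the proof must be arranged so that \emph{the colouring itself supplies the cluster membership of every vertex}, reducing all remaining verification to local adjacency tests plus a single threshold count; getting the colour budget right (enough colours to encode both the $\le 2k$ affected vertices' partition and the $\le\ell$ cluster labels, while keeping the family size $f(k,\ell)\cdot n^{O(1)}$) and checking that the edit count really is a sum of local indicators are the routine parts.
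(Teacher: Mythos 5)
There is a genuine gap, in fact two related ones. First, your structural claim is false: it is not true that every unaffected vertex lies in a connected component of $G$ that is already a clique and is untouched by the solution. Take the path $a\text-b\text-c$ with $k=1$, where the solution deletes $\{b,c\}$: the vertex $a$ is unaffected, yet its component $\{a,b,c\}$ is not a clique; or take a triangle missing one edge, where the solution adds that edge and the apex vertex is unaffected although its final cluster is not a clique in $G$. The correct statement (and the one the paper exploits) is weaker: if $v$ is unaffected, then its output cluster is exactly its closed neighborhood $\{v\}\cup N_G(v)$, but that cluster may contain affected vertices and may still require edits among them. Consequently a solution is \emph{not} described by the affected set plus an assignment of already-clique components, and your check ``its colour class restricted to unaffected vertices forms a clique already present in $G$'' would reject legitimate instances.

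Second, and more fatally, your repair for the reachability problem --- letting the colouring assign an output-cluster label to \emph{every} vertex --- asks more of universal coloring families than they can deliver. An $(n,2k,c)$-universal family only guarantees a prescribed colouring on sets of size at most $2k$; on the remaining $n-2k$ vertices the colours are arbitrary, and a family prescribing colours on all $n$ vertices would need size exponential in $n$, not $f(k,\ell)\cdot n^{O(1)}$. So the completeness direction (``some $\lambda$ realises the cluster-assignment on the rest'') is unsupported, which is exactly the step you flagged but did not resolve. The paper's proof of Theorem~\ref{theorem:clusterell} avoids both issues with a $(n,2k+\ell,2)$-universal family: all $\le 2k$ affected vertices are forced blue and one \emph{unaffected representative} per partly modified cluster is forced orange; since an orange vertex $d$ keeps its edges, its cluster is read off locally as $\{d\}\cup N_G(d)$ (no connectivity computation), the edits needed to turn these neighborhoods into cliques and cut them off are counted via Lemma~\ref{lem:threshold}, and the at most $2k$ vertices of completely modified clusters form a kernel solved by brute force in constant depth. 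If you want to salvage your write-up, replace the per-vertex cluster labels by such representative-based identification and drop the ``already a clique in $G$'' requirement.
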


\begin{corollary}\label{corollary:manycluster}
  $\PLang[\mathit{k}]{many-cluster-editing}\in\Para\Class{AC}^0$.
\end{corollary}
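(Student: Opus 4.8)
The plan is to reduce $\PLang[\mathit{k}]{many-cluster-editing}$ to $\PLang[\mathit{k,\ell}]{cluster-editing}$ by guessing the number~$\ell$ of target clusters and appealing to Theorem~\ref{theorem:clusterell}. The difficulty is that $\ell$ can be as large as~$|V|$, and the circuit of Theorem~\ref{theorem:clusterell} for parameter $(k,\ell)$ may legitimately be huge in~$\ell$, so naively taking the disjunction over all $\ell\le|V|$ would blow up the work far beyond $f(k)\cdot|V|^{O(1)}$. Hence the actual content is to cut the set of relevant values of~$\ell$ down to~$O(k)$ many, independently of the input.

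First I would record the elementary fact that inserting or deleting one edge changes the number of connected components of a graph by at most one; hence, if $G$ can be edited into $\ell$ cliques using at most~$k$ edits, then $\ell$ is within~$k$ of the number of connected components of~$G$. To make this range absolute, call a connected component \emph{clean} if it is already a clique. By the standard \emph{isolated clique} reduction rule for cluster editing, any editing witnessing a yes-instance can be shrunk to one of no larger size that touches no clean component, since editing a clean component is always wasteful. Therefore $(G,k)$ is a yes-instance if and only if either every component of~$G$ is clean, or the graph~$G'$ obtained by deleting all clean components from~$G$ can be edited into $\ell'$ cliques using at most~$k$ edits for some~$\ell'$. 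Now every component of~$G'$ is a non-clique and so forces at least one edit, whence a yes-instance of the $G'$-problem has at most~$k$ components; at most~$k$ deletions then split these into at most~$2k$ cliques in total, so $\ell'\in\{1,\dots,2k\}$, a range that no longer depends on~$G$.

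The pleasant part of turning this into a $\Para\Class{AC}^0$ computation is that ``the vertex~$v$ lies in a clean component'' is first-order expressible: it says that $N[v]$ is a clique and that $N[u]=N[v]$ for every $u\in N[v]$. So the set of clean vertices, the test whether all of~$G$ is clean, and the non-clean part~$G'$ are $\Class{FO}$-computable, and there are only $2k$ candidate values of~$\ell'$, each giving a cluster-editing parameter $(k,\ell')$ bounded by a function of~$k$. The obstacle, which I expect to be the crux, is handing~$G'$ over to the cluster-editing circuit: $G'$ lives on the set of non-clean vertices, whose \emph{cardinality} is essentially a count of clean components and hence is not $\Class{AC}^0$-computable, so one cannot relabel $G'$ onto $\{1,\dots,|V(G')|\}$, nor pad it back to $|V|$ vertices with isolated vertices, which would re-create $K_1$-clusters and distort the target count.

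The resolution is to avoid materialising~$G'$ as a standalone instance: for each of the $2k$ target values one re-runs the color-coding proof of Theorem~\ref{theorem:clusterell} directly on the $\Class{FO}$-definable non-clean part of~$G$, using that that proof only ever needs to address the $O(k)$ vertices incident to edited edges, which are located by a universal coloring family via Theorem~\ref{theorem:universalColoring}, together with the cliques attached to them, and additionally verifying in $\Class{FO}$ that every remaining cluster coincides with a clean component of~$G$. Taking the disjunction of these $2k$ constant-depth, \textsc{fpt}-size subcircuits together with the all-clean test then yields a \textsc{dlogtime}-uniform $\Class{AC}$-circuit family witnessing $\PLang[\mathit{k}]{many-cluster-editing}\in\Para\Class{AC}^0$.
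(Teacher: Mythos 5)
Your proposal is correct and follows essentially the same route as the paper: detect the components that are already cliques by a constant-depth first-order test, argue that an optimal editing never touches them, and then run the color-coding argument of Theorem~\ref{theorem:clusterell} on the remaining vertices, looking for up to $2k$ (partly or completely) modified clusters instead of a prescribed~$\ell$. The extra details you supply (the explicit justification of the $2k$ bound and the in-place handling of the non-clean part without relabelling or counting) are exactly the points the paper's proof treats implicitly.
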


We remark that if $\ell$ is not no longer considered a parameter in
cluster editing, the problem complexity increases only
moderately:

\begin{corollary}\label{corollary:cluster}
  $\PLang[\mathit{k}]{cluster-editing}\in\Para\Class{TC}^0$.
\end{corollary}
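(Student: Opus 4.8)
The goal is to show $\PLang[\mathit{k}]{cluster-editing}\in\Para\Class{TC}^0$, where now only $k$ (the number of allowed edge edits) is the parameter and $\ell$ (the number of target clusters) is part of the input. The starting point is Theorem~\ref{theorem:clusterell}, which already places the problem with both $k$ and $\ell$ as parameters in $\Para\Class{AC}^0$. So the only source of difficulty is that $\ell$ may now be as large as $|V|$, and the $\Para\Class{AC}^0$ construction presumably branches or builds a function table whose size depends on $\ell$. The plan is to argue that $\ell$ can in fact be \emph{eliminated} as a parameter cheaply: once we have decided which $\le k$ edges to add or delete (there are only $f(k)\cdot n^{O(1)}$ candidate edit sets to consider, guessed via the universal coloring family exactly as in Theorem~\ref{theorem:clusterell}), the resulting graph is a disjoint union of cliques if and only if it is $P_3$-free, which is a first-order check computable in $\Para\Class{AC}^0$. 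What remains is to verify that this cluster graph has \emph{exactly} $\ell$ connected components, and counting connected components of a disjoint-union-of-cliques is where the $\Class{TC}^0$ power is spent.

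\textbf{Key steps, in order.}
First I would recall the color-coding setup behind Theorem~\ref{theorem:clusterell}: in a yes-instance at most $2k$ vertices are touched by edited edges, so using an $(n,2k,2k)$-universal coloring family (Theorem~\ref{theorem:universalColoring}) we may assume the at most $k$ edited edges and their endpoints are isolated into color classes; this yields, in constant depth and $\Para\Class{AC}^0$ work, access to each candidate edit set~$F$ with $|F|\le k$. Second, for a fixed $F$, let $G' = G \triangle F$ be the edited graph; checking that $G'$ is a disjoint union of cliques is the purely local condition ``no induced $P_3$,'' i.e.\ $\forall u,v,w\ \neg(\{u,v\}\in E' \wedge \{v,w\}\in E' \wedge \{u,w\}\notin E')$, which is first-order and hence in $\Para\Class{AC}^0$ (this part needs no degree bound because the formula is fixed and quantifier depth is $3$, independent of everything). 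Third — the new ingredient — I count the connected components of $G'$: since $G'$ is a disjoint union of cliques, two vertices lie in the same component iff they are adjacent or equal, so the number of components equals $\sum_{v\in V} 1/|\{u : \{u,v\}\in E'\text{ or }u=v\}|$, a quantity expressible by computing, for each $v$, its (closed) degree $d_v$ via a single threshold/count gate and then summing the reciprocals — both are $\Class{TC}^0$ operations on $n$-bit numbers. Equivalently one counts the vertices $v$ that are the minimum-index vertex of their closed neighbourhood and checks that this count equals $\ell$; iterated addition and comparison over $n$ numbers of $O(\log n)$ bits is exactly what $\Class{TC}^0$ provides. Fourth, I take the big OR over all candidate edit sets $F$ of the conjunction ``($G\triangle F$ is $P_3$-free) and (it has exactly $\ell$ components)''; the OR has $\Para\Class{AC}^0$-many inputs and adds only constant depth, so the whole circuit is $\Para\Class{TC}^0$ with \textsc{fpt} work.

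\textbf{Main obstacle.}
The one genuinely non-$\Class{AC}^0$ step is counting/comparing the number of connected components against the input number $\ell$; this is unavoidable because exact counting of $n$ many bits is already $\Class{TC}^0$-hard, and it is precisely why the statement only claims membership in $\Para\Class{TC}^0$ rather than $\Para\Class{AC}^0$. The mild technical care needed is (a) to confirm that the reciprocal-sum (or ``is-local-minimum-and-count'') formulation of the component count is genuinely $\Class{TC}^0$-computable, which follows from closure of $\Class{TC}^0$ under iterated addition and integer division, and (b) to make sure \textsc{dlogtime}-uniformity is preserved when we splice the fixed $\Class{TC}^0$ counting sub-circuit into the $\Para\Class{AC}^0$ skeleton from Theorem~\ref{theorem:clusterell}; this is routine since the sub-circuit does not depend on $k$ or $\ell$ and all index arithmetic stays within \textsc{dlogtime}. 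No new combinatorial insight beyond Theorem~\ref{theorem:clusterell} is required — the corollary is essentially ``$\Para\Class{AC}^0$ guessing of the edit set, plus one $\Class{TC}^0$ counting gate to pin down $\ell$.''
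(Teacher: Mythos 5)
There is a genuine gap at the very first step, and it is the step on which everything else rests. You claim that ``there are only $f(k)\cdot n^{O(1)}$ candidate edit sets to consider, guessed via the universal coloring family exactly as in Theorem~\ref{theorem:clusterell}.'' This is not true, and it is not what that proof does. The number of edge-edit sets of size at most $k$ is $n^{\Theta(k)}$, which is not \textsc{fpt} work for any fixed exponent, and a universal coloring family does not let you recover the actual edit set: a coloring only guarantees that the (unknown) endpoints of the edited edges receive prescribed colors, while each color class may still contain $\Theta(n)$ vertices, so pinning down a concrete set $F$ from a coloring again requires $n^{\Theta(k)}$ choices. The proof of Theorem~\ref{theorem:clusterell} deliberately avoids enumerating edits; it recovers the \emph{cluster structure} instead (partly modified clusters are identified through the neighborhoods of orange vertices, and the completely modified clusters live on a kernel of at most $2k$ vertices), and it needs an $(n,2k+\ell,2)$-universal family, i.e.\ it crucially uses that $\ell$ is a parameter. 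So your skeleton -- big OR over candidate $F$, then a $P_3$-freeness check and a $\Class{TC}^0$ component count -- collapses at the point where the candidates $F$ are supposed to come from; the later steps (the $P_3$ test and counting components of a disjoint union of cliques via minimum-index vertices of closed neighborhoods, compared against $\ell$) are fine in isolation but have nothing valid feeding into them.

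For contrast, the paper attacks the actual difficulty, namely that the $(n,2k+\ell,2)$-coloring family from Theorem~\ref{theorem:clusterell} is too large when $\ell$ is input: it first identifies connected components that are already cliques and provably untouched by any optimal editing (those of size larger than $k$, or all but the first $2k$ components of each size $s\le k$), counts these components $x$ -- this single count over up to $n$ components is where $\Class{TC}^0$ is genuinely needed -- and then runs the $(k,\ell)$-parameterized $\Para\Class{AC}^0$ algorithm on the residual instance with $\ell' = \ell - x$, which is either rejected outright or bounded by $2k + k(k+1)/2$, i.e.\ by the parameter. If you want to salvage your write-up, you would have to replace the ``guess $F$'' step by such a reduction that makes the effective number of clusters a function of $k$ before invoking color coding.
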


Theorem~\ref{theorem:clusterell} has another interesting corollary:
Let $\PLang[\mathit{k,p}]{complete-$p$-partite-editing}$ be the
problem of determining whether in a graph $G$ we can add and\,/\,or
remove up to $k$ edges such that the resulting graph is complete
$p$-partite, that is, its vertex set can be partitioned into exactly
$p$ non-empty sets such that there is an edge between two vertices if,
and only if, they belong to two different sets. Since the complement
of a complete $p$-partite graph is exactly a collection of $p$
cliques, we get the following corollary:

\begin{corollary}\label{corollary:completeppartite} \hfil
{  \begin{enumerate}
  \item 
    $\PLang[\mathit{k,p}]{complete-$p$-partite-editing}
    \in \Para\Class{AC}^0$.
  \item
    $\PLang[\mathit{k}]{complete-$p$-partite-editing} 
    \in \Para\Class{TC}^0$.
  \end{enumerate}}
\end{corollary}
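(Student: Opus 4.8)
The plan is to exploit the complementation observation stated just before the corollary. A graph $G=(V,E)$ is complete $p$-partite if and only if its complement $\bar G=(V,\binom{V}{2}\setminus E)$ is the disjoint union of exactly $p$ cliques, i.e.\ consists of $p$ connected components each of which is a clique; here one uses that the parts of a complete $p$-partite graph are required to be nonempty, which matches the requirement that the cluster-editing target have exactly $\ell=p$ components. Moreover this correspondence is edit-preserving: inserting an edge $\{u,v\}$ into $G$ is the same as deleting $\{u,v\}$ from $\bar G$, and vice versa, so a set of at most $k$ insertions/deletions turning $G$ into a complete $p$-partite graph corresponds, over the same $k$ vertex pairs, to a set of at most $k$ deletions/insertions turning $\bar G$ into a union of $p$ cliques. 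Hence $(G,k,p)$ is a yes-instance of $\PLang[\mathit{k,p}]{complete-$p$-partite-editing}$ iff $(\bar G,k,p)$ is a yes-instance of $\PLang[\mathit{k,\ell}]{cluster-editing}$, and likewise $(G,k)$ is a yes-instance of $\PLang[\mathit{k}]{complete-$p$-partite-editing}$ iff $(\bar G,k)$ is a yes-instance of $\PLang[\mathit{k}]{cluster-editing}$.

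First I would check that $G\mapsto\bar G$ is computable by a constant-depth, fpt-size (indeed linear-size) \textsc{dlogtime}-uniform $\Class{AC}$-circuit family. This is immediate: if the input is an adjacency matrix, $\bar G$ is obtained by putting a single \textsc{not}-gate on every off-diagonal entry; if $G$ is given as an edge list, one computes for each pair $\{u,v\}$ the bit $\lnot\bigvee_{e}[e=\{u,v\}]$, still constant depth and polynomial size. The parameter is mapped by the identity (for part~1, $(k,p)\mapsto(k,\ell)$ with $\ell=p$; for part~2, $k\mapsto k$), which is trivially $\Class{FO}$-computable.

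Then, for part~1, I would compose this preprocessing layer with the \textsc{dlogtime}-uniform $\Para\Class{AC}^0$ circuit family for $\PLang[\mathit{k,\ell}]{cluster-editing}$ from Theorem~\ref{theorem:clusterell}: the composed family has depth $O(1)+O(1)=O(1)$, size still of the form $f(k,p)\cdot n^{O(1)}$, and prepending a uniformly indexed layer of \textsc{not}-gates preserves \textsc{dlogtime}-uniformity; this gives $\PLang[\mathit{k,p}]{complete-$p$-partite-editing}\in\Para\Class{AC}^0$. For part~2 the same composition is carried out with the $\Para\Class{TC}^0$ circuit family for $\PLang[\mathit{k}]{cluster-editing}$ from Corollary~\ref{corollary:cluster}; again depth stays constant and size stays fpt, yielding $\PLang[\mathit{k}]{complete-$p$-partite-editing}\in\Para\Class{TC}^0$.

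There is no real obstacle here beyond bookkeeping: the whole content is the edit-preserving complementation bijection together with the fact that $\Para\Class{AC}^0$ and $\Para\Class{TC}^0$ are trivially closed under prepending a depth-one \textsc{not}-layer. The one point worth double-checking is that the encoding of ``graph plus two numbers'' is the same for both problems up to a fixed, constant-depth recoding, so that the argument is literally a circuit composition and needs no separate reduction lemma.
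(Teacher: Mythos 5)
Your proposal is correct and matches the paper's own justification, which is exactly the complementation remark preceding the corollary: complementing the graph is an edit-preserving, constant-depth transformation turning complete-$p$-partite-editing into cluster-editing with $\ell=p$, so the claims follow by composing with Theorem~\ref{theorem:clusterell} and Corollary~\ref{corollary:cluster}. Your added bookkeeping about uniformity and circuit composition is fine and only spells out what the paper leaves implicit.
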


Finally, instead of looking for just one complete $p$-partite graph,
we can look for several at the same time:

\begin{problem}[{$\PLang[\mathit{k,p}]{multipartite-cluster-editing}$}]
  \begin{parameterizedproblem}
    \instance An undirected graph $G=(V,E)$, a natural number $k$, and
    a sequence of natural numbers $p_1,p_2,\dots,p_\ell$.
    \parameter $k$, $p = p_1 + \cdots + p_\ell$
    \question Can we add or delete $k$ edges of $G$ such that the
    resulting graph consist of $d$ connected components $C_1$ to $C_\ell$
    such that each $C_i$ is a complete $p_i$-partite graph?
  \end{parameterizedproblem}
\end{problem}

\begin{theorem}\label{theorem:multiclustering} \hfil
  {\begin{enumerate}
  \item 
    $\PLang[\mathit{k,p}]{multipartite-cluster-editing}\in\Para\Class{AC}^0$
  \item
    $\PLang[\mathit{k,\ell}]{multipartite-cluster-editing}\in\Para\Class{TC}^0$.
  \end{enumerate}}
\end{theorem}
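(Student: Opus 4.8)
The plan is to reduce \PLang[\mathit{k,p}]{multipartite-cluster-editing} to \PLang[\mathit{k,\ell}]{cluster-editing} (and its unparameterized-$\ell$ variant) by a complementation trick analogous to the one used for Corollary~\ref{corollary:completeppartite}. First I would observe that a connected complete $p_i$-partite graph on a vertex set $C_i$ is precisely the complement (within $C_i$) of a disjoint union of $p_i$ cliques; hence a graph whose connected components $C_1,\dots,C_\ell$ are complete $p_i$-partite graphs is, after taking the complement of each component and noting that distinct components contribute no edges in the original, closely related to a graph whose "components" in the complemented picture are collections of cliques. The subtlety is that complementation is not a global operation here: the target graph is disconnected, and its complement would connect all the components together. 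So the reduction must be done component-wise, which means we must first \emph{guess} the partition of $V(G)$ into the blocks $C_1,\dots,C_\ell$ — and this is exactly where color coding enters.

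The key steps, in order. (1) Reduce to bounded degree, or more precisely bound the relevant structure: note that after editing at most $k$ edges, any vertex that is in a "small" component (of size depending only on $k$ and $p$) can be handled by brute force, while "large" components are heavily constrained. I would argue that the number of components $\ell$ that contain a vertex of high degree, as well as the sizes of components, can be bounded in terms of $k$ and $p$: in a connected complete $p_i$-partite graph, the independent sets (colour classes) all have size bounded by... well, not quite — here I would instead exploit that editing budget $k$ forces all but a bounded number of vertices to lie in components that are "almost" complete multipartite already. (2) Apply a \emph{universal coloring family} from Theorem~\ref{theorem:universalColoring} with $c = \ell$ (or $c = p$, depending on which object we colour) to guess, for each of the $O(f(k,p))$ "critical" vertices, which component it belongs to; since the number of such vertices is bounded in the parameter, an $(|V|, O(f(k,p)), O(f(k,p)))$-universal coloring family suffices, and it is computed by a constant-depth circuit of \textsc{fpt} size. (3) For each colouring in the family (in parallel, via \textsc{or} over all colourings), complement each putative block $C_i$ internally, turning "complete $p_i$-partite on $C_i$" into "$p_i$ disjoint cliques on $C_i$", glue the blocks back disjointly, and feed the resulting instance, together with the edited budget $k$ and the target component-count $p_1 + \cdots + p_\ell$ (for part~1, where $p$ is the parameter) into the constant-depth circuit from Theorem~\ref{theorem:clusterell}; for part~2, where only $k$ is the parameter, feed it into the $\Para\Class{TC}^0$ circuit underlying Corollary~\ref{corollary:cluster} instead, since the total number of cliques is no longer bounded by a parameter. (4) Accept iff some colouring leads to acceptance; correctness follows because a valid editing induces a colouring in the family that colours the critical vertices correctly, and conversely any accepting colouring yields a valid editing by reversing the complementation.

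The main obstacle I expect is step (1): cleanly identifying which part of $G$ must be "guessed" with the colour-coding family and bounding its size purely in terms of $k$ and $p$. Unlike plain \PLang[\mathit{k,\ell}]{cluster-editing}, where every clique that has been touched by an edit has bounded "perturbation" and the untouched cliques are trivially recognizable, here a complete $p_i$-partite component can be arbitrarily large, and an untouched such component is \emph{not} locally trivial to recognize — a vertex inside it has no idea, from local information, which colour class it occupies or which component it is in. The fix is that we do \emph{not} need to guess the colour-class structure, only the component membership, and component membership of untouched vertices can be read off from the existing edge relation (two vertices are in the same untouched component iff connected), while only the $O(k)$ endpoints of edited edges plus possibly $O(k)$ further "boundary" vertices genuinely need guessing; making this bookkeeping precise — in particular ruling out that a single edit can merge two giant multipartite components in a way that defeats the local recognition — is the delicate part. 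Once that combinatorial bound is nailed down, the circuit construction is routine: a constant-depth \textsc{or} over an \textsc{fpt}-size family of colourings, each branch invoking the already-established constant-depth (resp.\ $\Class{TC}^0$) circuit from Theorem~\ref{theorem:clusterell} and Corollary~\ref{corollary:cluster}, together with \textsc{fpt}-size parallel complementation of the guessed blocks.
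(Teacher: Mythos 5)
There is a genuine gap, and it sits exactly where you suspect, plus one more place. First, your mechanism for assigning the \emph{untouched} vertices to blocks does not work as stated: ``two vertices are in the same untouched component iff connected'' appeals to graph connectivity, which is not computable in constant depth (the paper even notes that the parameterized distance problem lies outside $\Para\Class{AC}^0$). To make this constant-depth you would have to exploit the special structure of complete multipartite components (diameter $2$, or equality of neighborhoods), and that is precisely the non-trivial combinatorial step you leave open. The paper's proof resolves it differently: it reuses the two-color universal coloring family of Theorem~\ref{theorem:clusterell} (all edit-touched vertices blue, a representative per partly modified cluster orange), declares two orange vertices \emph{equivalent} iff they are adjacent or have identical neighborhoods, shows this is an equivalence relation whose minimal representatives identify each partly modified cluster up to at most $2k$ leftover vertices of~$M$, aborts if more than $2k$ vertices remain unidentified, and then \emph{brute-forces} (in parallel) both the assignment of the $\le 2k$ leftovers to clusters and the injective matching of identified clusters to the indices $1,\dots,\ell$. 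No many-color guessing of component indices is needed, and no connectivity computation ever occurs.

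Second, your step (3) is unsound as described. After complementing inside the guessed blocks and gluing them disjointly, you feed a \emph{single} instance of $\PLang[\mathit{k,\ell}]{cluster-editing}$ with target count $p_1+\cdots+p_\ell$. But a cheap clustering of that instance may place cliques \emph{across} your guessed blocks (block-local complementation gives no correspondence back to the original graph for such solutions), and even a block-respecting clustering only has the correct \emph{total} number of cliques: nothing forces block $C_i$ to split into exactly $p_i$ cliques rather than, say, swapping counts with another block, so ``total $=p$'' does not certify a valid solution. You also never account for the cost of removing the cross-block edges of $G$. To repair this you would have to solve the editing problem separately per block with the prescribed $p_i$ (trying all assignments of the $p_i$'s to blocks and all splits of the budget $k$), and add the count of cross-block edges via Lemma~\ref{lem:threshold} -- which is essentially what the paper does by invoking Corollary~\ref{corollary:completeppartite} once per identified cluster and summing, rather than reducing to a single cluster-editing instance. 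With those two repairs your outline would converge to the paper's argument, but as written both the block-recognition step and the reduction's correctness are missing.
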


\subparagraph*{Graph Separation Problems.}
\label{section:conquer}

Graph separation problems are problems where we ask to separate a set
of $\ell$~vertices from the remaining graph by deleting at most $k$
other vertices. They play a key role in many real-world network
applications like finding communities or isolating dangerous
vertices. While this problem is well-known to be $\Class{NP}$-complete
in the unparameterized setting and $\Class{W}[1]$-hard in the
parameterized setting for parameters $k$, $\ell$, and $k+\ell$, the
complexity of the problem changes dramatically if we require the
separated set of vertices to be connected:

\begin{problem}[{$\PLang[\mathit k,\ell]{cutting-$\ell$-connected-vertices}$}]
  \begin{parameterizedproblem}

    \instance An undirected graph $G=(V,E)$ and two natural numbers
    $k$ and $\ell$.

    \parameter $k$, $\ell$

    \question Is there a partitioning of $V$ into three sets $X$, $S$,
    and $Y$ with $|X|=\ell$ and $|S|\leq k$ such that $X$ is connected
    and for all $\{x,y\} \in E$ with $x\in X$ we have $y\not\in Y$?

  \end{parameterizedproblem}
\end{problem}
Marx~\cite{Marx2006} showed that this problem is fixed-parameter
tractable; Fomin, Golovach, and Korhonen~\cite{FominGK2013} studied a
similar version, namely $\PLang[\mathit
k,\ell]{cutting-at-most-$\ell$-vertices}$, in which the set $X$ is not
required to be connected and may be of size \emph{at most} $\ell$,
i.\,e., $1<|X|\leq\ell$, and for which Fomin et al.\ gave an
$\Class{FPT}$-algorithm based on color coding. The main idea is to
colorize the given graph with two colors such that the vertices of the
set $X$ get colored with the first color and the vertices in $S$
get the second color. Thus, we only have to find the
solution within the vertices of the first color.  
This algorithm can be implemented in $\Para\Class{AC}^{0\uparrow}$ and, moreover,
works for $\PLang[\mathit
k,\ell]{cutting-$\ell$-connected-vertices}$ as well.

\begin{theorem}\label{theorem:cutting} \hfil 
{  \begin{enumerate}
  \item 
    $\PLang[\mathit k,\ell]{cutting-$\ell$-connected-vertices}\in\Para\Class{AC}_{f(\ell)}\subseteq\Para\Class{AC}^{0\uparrow}$.
  \item 
    $\PLang[\mathit k,\ell]{cutting-at-most-$\ell$-vertices}\in\Para\Class{AC}_{f(\ell)}\subseteq\Para\Class{AC}^{0\uparrow}$.
  \end{enumerate}}
\end{theorem}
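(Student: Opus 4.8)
The plan is to implement the Fomin--Golovach--Korhonen color-coding approach in parallel, using the universal coloring families from Theorem~\ref{theorem:universalColoring} and the embedding machinery of Theorem~\ref{theorem:embtw} (via Lemma~\ref{lem:distance} for the connectivity checks). First I would handle part~(1), the connected version. The key structural observation is that a valid solution consists of a connected set $X$ of exactly $\ell$ vertices together with its neighborhood $S = N(X)$, which must have size at most $k$; once $X$ is fixed, $S$ is determined, so the only thing to search for is~$X$. Since $|X| = \ell$ we can afford to guess, via a $(|V|,\ell,\ell)$-universal coloring family, a coloring that assigns the $\ell$ vertices of $X$ the colors $1,\dots,\ell$ bijectively. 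For each coloring $\lambda$ in the family and each way of identifying color classes we then check, in parallel, whether picking one vertex from each of the colors $1,\dots,\ell$ yields a connected set: connectivity of an $\ell$-vertex set reduces to checking, for all pairs, that there is a path of length at most $\ell$ staying inside the chosen set, which is exactly the parameterized-distance situation handled in $\Para\Class{AC}_{f(\ell)}$ by Lemma~\ref{lem:distance}. Finally we verify $|N(X)| \le k$; this is a $\PLang{threshold}$ query with threshold bounded in the parameter, hence in $\Para\Class{AC}^0$ by Lemma~\ref{lem:threshold}. Accepting iff some coloring and some choice passes all checks gives a $\Para\Class{AC}_{f(\ell)}$ circuit: the depth comes solely from the distance computations (depth $f(\ell)$) and the coloring-family and threshold subcircuits are constant depth, while the work is $f(\ell)\cdot |V|^{O(1)}$ since the coloring family has size $f(\ell)\cdot n^{O(1)}$ and there are only $\binom{\ell}{1}^{\ell}$-many ways to select one vertex per color within each fixed color pattern (actually we simply iterate over the at most $\ell$-element subsets of a single color class, but the bookkeeping stays polynomial with an $f(\ell)$ factor).

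For part~(2), $\PLang[\mathit k,\ell]{cutting-at-most-\ell-vertices}$, the argument is the same except that $X$ is no longer required to be connected and may have size anywhere in $\{2,\dots,\ell\}$. Here I would, following Fomin et al., use a $2$-coloring family of the stronger form that separates a designated $(|X| + |S|)$-set into an ``$X$-part'' and an ``$S$-part''; more precisely a $(|V|, k+\ell, 2)$-universal coloring family guarantees that for the set $X \cup S$ of size at most $k+\ell$ there is a coloring painting all of $X$ with color~$1$ and all of $S$ with color~$2$. Given such a coloring, a correct solution is obtained by taking $X$ to be some subset of the color-$1$ vertices with no color-$1$ neighbor outside $X$ — but since there are up to $|V|$ color-$1$ vertices this still needs care: one takes $X$ to be a union of connected components of the graph induced on color-$1$ vertices, chosen so that $2 \le |X| \le \ell$, and $S$ to be the color-$2$ vertices, of which at most $k$ may lie on the boundary. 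Again the connected-component computation and the "no color-$1$ edge leaves $X$'' condition reduce to parameterized reachability (Lemma~\ref{lem:distance}) together with threshold checks (Lemma~\ref{lem:threshold}), all within the color pattern; we cycle over the polynomially-many choices of which components to include, of which there are at most $f(\ell)$ relevant ones since each has size at most $\ell$. The depth is $f(\ell)$ and the work is $f(k+\ell)\cdot |V|^{O(1)}$, which is still $\Para\Class{AC}_{f(\ell)}$ work once we note $k \le |V|$ and absorb the $2^{k+\ell}$ coloring-family factor into the $n^{O(1)}$ slack or into $f$; here one should double-check that the universal coloring family size bound $O(n\log c\cdot c^{k^2}\cdot k^4\log^2 n)$ from Theorem~\ref{theorem:universalColoring} with $c=2$ and ``$k$''${}=k+\ell$ indeed stays within the $W$ family — it does, being $2^{(k+\ell)^2}\cdot n^{O(1)}$.

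The main obstacle I expect is not any single hard step but getting the \emph{depth} bound to depend only on $\ell$ and \emph{not} on $k$ or on $n$ in part~(2): the natural way to certify that $X$ has no color-$1$ edge leaving it, or to enumerate and test the connected components of the color-$1$ subgraph, could accidentally introduce a $\log n$ or a $k$ factor in the depth if done naively. The trick is to route every such test through Lemma~\ref{lem:distance}, whose depth is bounded purely in the parameter $\ell$ (the radius of balls we care about is $\le \ell$), and to do all the combinatorial selection (which coloring, which components) in a single unbounded-fan-in layer rather than iteratively. A secondary subtlety is handling the requirement $|X| = \ell$ exactly in part~(1) versus $2 \le |X| \le \ell$ in part~(2) together with the connectedness toggle; the cleanest route is to phrase part~(1) as ``guess $X$ directly via an $\ell$-coloring'' and part~(2) as ``guess the bipartition $X/S$ via a $2$-coloring'' — two superficially different uses of Theorem~\ref{theorem:universalColoring} — rather than trying to unify them, and then verify in each case that the work remains $f(k+\ell)\cdot n^{O(1)}$, i.e.\ genuinely \textsc{fpt} work.
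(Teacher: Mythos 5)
Your plan for part~(1) has a genuine gap that breaks the required work bound. After guessing a $(|V_G|,\ell,\ell)$-universal coloring in which the target set $X$ is rainbow-colored, you still have to \emph{locate} $X$: your step ``pick one vertex from each of the colors $1,\dots,\ell$ and check that the chosen set is connected'' ranges over up to $|V_G|^{\ell}$ selections, since each color class contains on the order of $|V_G|/\ell$ vertices rather than a bounded number; the count of roughly $\ell^{\ell}$ selections you give is therefore wrong, and the parenthetical fallback (``iterate over the at most $\ell$-element subsets of a single color class'') is equally an $|V_G|^{\Theta(\ell)}$ enumeration. Neither is work of the form $f(k,\ell)\cdot|V_G|^{O(1)}$, and the connectivity test ``paths of length at most $\ell$ staying inside the chosen set'' presupposes that the set has already been chosen, so nothing in the proposal replaces this enumeration by a dynamic program or any other FPT-size mechanism. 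The paper avoids the problem by \emph{not} rainbow-coloring $X$: it uses a $(|V_G|,k+\ell,2)$-universal family and asks for a coloring in which $X$ is blue and the separator $S$ is orange. Since every edge leaving $X$ ends in $S$, under such a coloring the blue connected component of any $x\in X$ is exactly $X$; hence one tries all vertices $x$ in parallel, computes the blue component of $x$ by distance queries with radius at most $\ell$ (Lemma~\ref{lem:distance}, aborting if some vertex has blue distance $\ell+1$), checks that it has exactly $\ell$ vertices (Lemma~\ref{lem:threshold}), and checks that the number of orange vertices adjacent to it is at most $k$. No subset selection occurs, the depth is $f(\ell)$, and the work is $f(k+\ell)\cdot|V_G|^{O(1)}$. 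In other words, the machinery you describe for part~(2) is exactly what the paper uses for \emph{both} parts, and the correct repair of your part~(1) is to apply it there as well.

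A secondary weak point is in your part~(2): the claim that you can ``cycle over the choices of which blue components to include, of which there are at most $f(\ell)$ relevant ones'' is unjustified, because the blue subgraph can have $\Theta(|V_G|)$ components of size at most $\ell$, and enumerating unions of them is again potentially non-FPT work (a greedy choice is not obviously sound either, since the orange neighborhoods of different components may overlap). The paper instead searches for a \emph{single} blue component of size at most $\ell$ with at most $k$ orange neighbors, relying on the fact that a connected component of a solution $X$, together with its own neighborhood inside $S$, is again a cut of the required kind (only the corner case of the lower bound on $|X|$ with singleton components deserves a remark). If you keep the union formulation, you must bound the number of candidate unions by a function of the parameters, which your present text does not do.
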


We conclude with the remark that both problems can also be solved with
algorithms similar to 
the ones presented above if we consider the \emph{terminal versions}
of these problems~\cite{FominGK2013}, i.\,e., there is a special
terminal vertex~$t$ which has to be part of $X$. For this, we have to
modify the above algorithms to consider only blue components that
contain~$t$.

\section{Conclusion}

We have seen that many natural parameterized problems can be solved in
constant parallel time or in parallel time depending only on the
parameters while doing only ``\textsc{fpt} work.'' We stress that our
results are of a theoretical nature and do not directly give practical
parallel implementations for the problems presented; but they show
that such implementations are possible \emph{in principle} for them. 
The core technique used in all proofs (at least indirectly) was
\emph{color coding,} which can be done in constant time and which is
already used in practice. 

This paper did not address lower bounds. While for $\Para\Class{AC}^0$
this is not problematic since this class lies at 
the bottom of almost any hierarchy of parameterized classes,
some problems in $\Para\Class{AC}^{0\uparrow}$ might well ``fall down'' to
$\Para\Class{AC}^0$. Here we only know a explicit lower bound for the
distance problem, which does not lie in
$\Para\Class{AC}^0$. Establishing lower bounds for other problems in $\Para\Class{AC}^{0\uparrow}$
is therefore a reasonable research goal.

\bibliographystyle{plain}
\bibliography{main}

\clearpage
\appendix

\section{Technical Appendix: Proofs}

For the readers convenience, the claims of the proofs given in this
appendix are repeated before the proofs.

\reclaimtheorem{theorem:universalColoring}

\begin{proof}
  Define 
  \begin{align*}
    \lambda_{p,a}(x) &= (a\cdot x\bmod p)\bmod k^2, \\
    \Lambda'_{n,k} &= \bigl\{\,\lambda_{p,a}\mid \text{$p$ is a
                              prime with $p<k^2\log n$ and 
                              $a\in\{0,\dots,p-1\}$}\bigr\},
    \\
    \Lambda_{n,k,c} &= \bigl\{\omega\circ\lambda_{p,a}\mid 
                      \omega\colon\{0,\dots,k^2-1\}\to\{1,\dots,c\},\;
          p<k^2\log n,\;
          a\in\{0,\dots,p-1\}\bigr\}.
  \end{align*}
  It is well-known that $\Lambda'_{n,k}$ is a family of $k$-perfect
  hash functions, i.\,e., for every subset $S\subseteq\{1,\dots,n\}$
  with $|S|=k$ it contains a function that is injective on $S$, see
  \cite{FlumG2006}. Therefore, given a subset $S$ and a function $\mu
  \colon S \to \{1,\dots, c\}$, some member of $\lambda_{p,a} \in
  \Lambda'_{n,k}$ will map the members of~$S$ injectively to a subset
  $S'$ of $\{0,\dots,k^2-1\}$ and, then, some function $\omega\colon
  \{0,\dots,k^2-1\} \to \{1,\dots, c\}$ will map $S'$ in such a way
  that $\omega \circ \lambda_{p,a}$ equals $\mu$ on $S$. Consequently
  the set $\Lambda_{n,k,c}$ is an $(n,k,c)$-universal coloring
  family. Notice that we use all $p< k^2\log n$ in the definition of
  $\Lambda_{n,k,c}$ and, thus, including the prime numbers only
  indirectly.  The sizes of the two sets can be bounded by
  $|\Lambda'_{n,k}|\leq (k^2\cdot\log n)^2$ and $|\Lambda_{n,k,c}|\leq
  \smash{c^{k^2}}\cdot (k^2\cdot\log n)^2 = \smash{c^{k^2}}k^4\log^2
  n$. Each function in $\Lambda_{n,k,c}$ can clearly be encoded in $n
  \log_2 c$ bits.
  
  For the construction of circuits~$C_{n,k,c}$ observe that they
  have no inputs and must just output $\Lambda_{n,k,c}$ in a fixed
  encoding. Thus, we can, in principle, hardwire the complete
  output of $C_{n,k,c}$ into a depth-0 circuit. The tricky part is, of
  course, arguing that the circuit family is
  \textsc{dlogtime}-uniform. However, this is surprisingly simple:
  Having a look at the definition of $\Lambda_{n,k,c}$, we see that 
  computing the $i$th bit of its encoding only involves simple
  computations consisting of additions, multiplications, and modulo
  operations on $i$ and the numbers $n$, $k$, and $c$. Now,
  \textsc{dlogtime}-uniformity means that we have time \emph{logarithmic}
  in the   \emph{unary} encodings of these numbers and hence
  \emph{polynomial} time in   their \emph{binary} encodings. Since
  addition, multiplication, and modulo   are clearly polynomial-time
  computable, we get the claim. 
\end{proof}

\reclaimlemma{lem:threshold}

\begin{proof}
  On input of a bitstring $b$ of length~$n$ and a number~$t$, use
  Theorem~\ref{theorem:universalColoring} to compute an $(n, t,
  t)$-universal coloring family. Now, if $b$ contains at least $t$
  many $1$'s, then there is a coloring of the positions of~$b$ such
  that each color class contains at least one~$1$. Thus, it suffice to
  test in parallel for all colorings whether this is the case. 
\end{proof}

\reclaimtheorem{theorem:embtw}

\begin{proof}
  Let $N$ denote the set of nodes of~$T$. For a node $n$
  of $T$, let $T_n$ be the subtree of $T$ rooted at~$n$ and  
  let $N_n$ be the set of its vertices.
  Color $G$ by a $(|V_G|,|V_H|, |V_H|)$-universal coloring family using
  Theorem~\ref{theorem:universalColoring} and
  test all members of this family in parallel. To simplify the notation,
  let us identify the $|V_H|$ colors in each coloring with the
  vertices of $V_H$.
  
  Let us call a subset $X \subseteq V_G$ \emph{colorful} if all
  vertices of $X$ have a different color; let $Y \subseteq V_H$ be the
  set of these colors and let $\mu_X \colon Y \to X$ map each color
  $y$ in $Y$ to the vertex $x$ in $X$ having this color. Note that if
  there is an embedding $\phi$ of~$H$ into $G$, for at least one
  coloring there is a $\mu_X=\phi$ such that $X = \phi(V_H) \subseteq  V_G$ is
  the image of~$V_H$. 
  
  Let us call a colorful subset $I \subseteq V_G$ of size at most
  $\mathrm{width}(T)+1$ \emph{good for a node $n$ of $T$} if there is a colorful superset
  $J \supseteq I$ of vertices of~$V_G$ such that $\mu_J$ is an
  injective homomorphism $\mu_J \colon \bigcup_{m \in N_n} \iota(m)
  \to J$. In other words, $I$ can be extended to a solution of the
  embedding problem for the tree rooted at $n$.
  
  Clearly, since the size of the $\iota(n)$'s and the $I$'s are
  restricted by the $\mathrm{width}(T)+1$, for a leaf $n$ of $T$ we can
  decide whether a subset is good for some node of~$T$ using only a
  constant number of $\Class{AC}$-layers of width bounded by a
  function in $\mathrm{width}(T)$. Also observe that the number
  of subsets of $V_G$ of size at most $\mathrm{width}(T)+1$ is bounded
  by $|V_G|^{\mathrm{width}(T)+1}$ and,
  thus, we can consider all of them in parallel in each layer of an
  $\Class{AC}$-circuit. 
  
  We must now show that we can decide, using only as many layers as
  the depth of~$T$, whether $H$ has an embedding in~$G$. In
  a first layer, we first determine for each leaf $n$ of~$T$ the set
  of all good sets $I$ for~$n$. In the inductive step, consider a node
  $n$ such that for all  its children we have already determined which
  sets are good for 
  them. Let $I$ be a colorful set for which we must determine whether
  it is good for~$n$. We claim that this is the case when two
  conditions are met: 
  \begin{enumerate}
  \item The set $I$ is ``a correct embedding itself,'' meaning that
    $\mu_I$ is an injective homomorphism $\mu_I \colon \iota(n) \to
    I$. 
  \item The children of $n$ ``can be made consistent with $I$,''
    meaning that for each child $c$ of $n$ in~$T$, there is a colorful
    set $I_c \subseteq V_G$ that is good for $c$ and $\mu_I$ and
    $\mu_{I_c}$ are identical on $\iota(n) \cap \iota(c)$.
  \end{enumerate}
  To see that these tests suffice in order to test whether $I$ is good
  for $n$, just observe that by the definition of a tree-decomposition, all bags that contain a particular vertex $h$ of $V_H$
  must form a connected subset of~$T$. Our second condition ensures
  that when a given vertex $g \in V_G$ has been picked as the image
  of~$h$ in some~$I$, the same vertex must have been picked in all
  children and, thus, a partial homomorphism on $I$ can be extended to
  a homomorphism on the vertices in the whole tree rooted at~$n$.
  
  To conclude the proof, we just observe that the two tests can
  clearly be implemented with a constant number of
  $\Class{AC}$-layers. 
\end{proof}

\reclaimlemma{lem:distance}
\begin{proof}
  Determining whether there is a path of length $d$ from $s$ to $t$ in
  $G$ is the same as asking whether the path $P_d$ of length $d$ can
  be embedded into a directed graph $G$  with the start and  end of
  the path marked appropriately so that they must be mapped to $s$
  and~$t$, respectively. Since paths have tree-width~1,
  Theorem~\ref{theorem:embtw} and the second remark following it give
  the claim.
\end{proof}

\reclaimtheorem{thm:meta}
\begin{proof}
  The first part of our proof is identical to the one given by Flum
  and Grohe in~\cite{FlumG2002}: Let $\phi$ be a formula given as
  input. For simplicity of presentation, we assume that the structure
  $\mathcal A$ is actually an undirected graph $G = (V,E)$ of maximum
  degree $\delta$. Let $d(a,b)$ denote the distance of two vertices
  in~$G$ and let $N_r(a) = \{\,b \in V \mid d(a,b) \le r\,\}$ be the ball
  around $a$ of radius $r$ in~$G$. Let $G[N_r(a)]$ denote the subgraph
  of $G$ induced on $N_r(a)$. By Gaifman's Theorem~\cite{Gaifman1982} 
  we can rewrite $\phi$ as a Boolean combination of formulas of the
  following form:
  \begin{align*}
    \exists x_1 \cdots \exists x_k \Bigl(\textstyle \bigwedge_{i\neq j}
    \psi_{\mathrm{dist}>2r}(x_i,x_j) \land \bigwedge_i \psi(x_i)\Bigr)       
  \end{align*}
  where $\psi_{\mathrm{dist}>2r}(x_i,x_j)$ is a standard formula
  expressing that $d(x_i,x_j) > 2r$ and $\psi$ is \emph{$r$-local,}
  meaning that for all $a \in V$ we have $G \models \psi(a) \iff
  G[N_r(a)] \models \psi(a)$.  What remains to be done is to determine
  whether there are $k$ 
  vertices $a_1$ to $a_k$ in $G$ such that the balls $N_r(a_i)$ do not
  intersect and $G[N_r(a_i)] \models \psi(a_i)$ holds for them.

  At this point, we digress from the line of argument of Flum and
  Grohe, who now give a slightly involved space-efficient algorithm
  for determining the existence of such $a_i$ without having to write
  them down (which is not possible in parameterized logarithmic
  space). Instead, we use color 
  coding at this point: Introduce colors $1$ to $k$. Since the maximum
  degree~$\delta$ is a parameter and $r$ depends only on a parameter,
  the maximum size $M$ of any $N_r(a)$ is bounded by the
  parameter. This means that there is an $(|V|, M k, k+1)$-universal
  coloring family such that for the vertices $a_i$ from above at least
  one coloring has the following property: All vertices in
  $N_r(a_i)$ have the same  color~$i$. This means that each
  $N_r(a_i)$ is contained in a monochromatic connected component
  of~$G$ having color~$i$. 

  It remains to test whether for each color~$i$ there is a
  vertex~$a_i$ such that $N_r(a_i)$ has color~$i$ and $G[N_r(a_i)]
  \models \psi(a_i)$ holds. For this, let some candidate $a_i$ be
  given. We need to determine for a given vertex~$b$ whether $d(a_i,b)
  \le r$ where the distance is computed in the subgraph of $G$ induced
  by the vertices of color~$i$.   In other words, we need to solve the problem
  $\PLang[\mathit{d,\delta}]{undirected-distance}$, which is
  parameterized over the distance $d$ and the maximum
  degree~$\delta$ and which can be solved in $\Para\Class{AC}_{f(r)}$
  by Lemma~\ref{lem:distance}.\footnote{Flum and Grohe argue that the
    undirected distance problem can be solved in space $f(r,\delta) +
    O(\log n)$, since we just need $r \log_2 \delta$ bits to describe
    a path of length $r$ starting at a vertex~$a$ and can iterate over
    all possible paths with that many bits.}
  Once the set $N_r(a)$ of vertices reachable from a vertex~$a$ in at
  most $r$ steps has been determined, we can create an isomorphic
  copy of $G[N_r(a)]$ consisting just of an $|N_r(a)| \times |N_r(a)|$
  adjacency matrix in $\Para\Class{AC}^0$: Number the vertices of~$G$
  in some manner (for instance, in the order they appear in the
  input), which also induces an ordering on the vertices of
  $N_r(a)$. The entry in row~$i$ and column~$j$ of the matrix is a $1$
  if the $i$th and the $j$th vertex in $N_r(a)$ are connected by an
  edge in $E_G$. Determining  which vertex is the $i$th vertex of 
  $N_r(a)$ can be done by a $\Para\Class{AC}^0$ circuit by
  Lemma~\ref{lem:threshold}.  

  Given the adjacency matrix of $G[N_r(a)]$, we can clearly decide in
  $\Para\Class{AC}^0$ whether $G[N_r(a)] \models \phi$, since the size
  of $G[N_r(a)]$ depends only on the original input parameters. 
\end{proof}

\reclaimtheorem{thm:vc in ac0}

\begin{proof}
  Let us reiterate the steps of the well-known Buss kernelization:
  Let $G$ be an input graph and let $k$ be the size of
  the sought vertex cover. First, we can determine, in parallel, all
  vertices $v$ that have degree at least $k+1$ and, as observed by
  Buss, all of these vertices must be part of any vertex cover of size
  at most $k$. Remove these vertices from $G$
  in parallel and then remove all isolated vertices. Buss' second
  observation is that if the  remaining graph has more than $k(k+1)$
  vertices, no vertex cover of size $k$ exists. Thus, we get a
  quadratic problem kernel.

  Elberfeld et~al.~\cite{ElberfeldST2014} observe that the essential
  parallel steps of the algorithm are the following: (1)~Check whether
  the degree of 
  a vertex is at least $k$. (2)~Checking whether there are at most $k$
  such vertices. (3)~Checking whether there are at most $k(k+1)$
  vertices that are not only connected to the high-degree
  vertices. (4)~Computing the subgraph induced by these $k(k+1)$
  vertices.

  While Elberfeld et~al.\ conclude at this point that the computation
  can be implemented by $\Para\Class{TC}^0$~circuits (``we just have to
  count''), Lemma~\ref{lem:threshold} shows that the computation can
  be implemented using a $\Para\Class{AC}^0$~circuit: All counting
  involves thresholds depending only of the parameter. 
\end{proof}

\reclaimtheorem{thm:pvc}

\begin{proof}
  On input of a graph $G$, first test whether there is a vertex of
  degree at least~$t$. If so, we can accept since this vertex alone
  already constitutes the desired cover. Otherwise, we know that the
  \emph{graph has a maximum degree bounded by the parameter} and we
  can apply Theorem~\ref{thm:meta} to the following first-order
  formula, which depends only on $k$ and~$t$:
  \begin{align*}
    \underbrace{\exists x_1 \cdots \exists x_k}_{\text{the size-$k$
        cover}} \underbrace{\exists a_1 \exists b_1 \cdots 
    \exists a_t \exists b_t}_{\text{the $t$ covered edges}} \Bigl(
    \phi_{\operatorname{dist}}(a_1,b_1,\dots,a_t,b_t) \land \textstyle
    \bigwedge_{i=1}^t \bigl(E(a_i,b_i) \land \bigvee_{j=1}^k a_i=x_j\bigr)\Bigr).
  \end{align*}
  Here, $\phi_{\operatorname{dist}}$ is a standard formula expressing
  that $\{a_1,b_1\}$, \dots, $\{a_t,b_t\}$ are distinct sets. 
\end{proof}

\reclaimtheorem{thm:epvc}

\begin{proof}
  We again wish to apply Theorem~\ref{thm:meta}, but now the
  preprocessing step is a bit more complicated: Vertices of degree
  higher than $t$ no longer constitute a solution, indeed, these
  vertices \emph{cannot} be selected as part of a solution. However,
  we also cannot simply remove them as we did in the proof of
  Theorem~\ref{thm:vc in ac0} since parts of their neighbors might be chosen
  and the edges attached to them are then part of the $t$ covered
  edges. The trick is to replace all vertices~$v$ of degree $d > t$ 
  by $d$ new vertices and to add an edge from each of the former
  neighbors of $v$ to exactly one of these $d$ new vertices. (It is
  not difficult to implement these replacement steps in constant depth
  by also adding some unnecessary isolated vertices.) Let us color all
  ``new'' vertices red. 

  Once the graph has been preprocessed, it will once more have degree
  bounded in the parameter and we can apply Theorem~\ref{thm:meta} to
  a formula stating ``there exist $k$ vertices and  $t$ distinct
  edges such that the $k$ vertices are not red, the $t$ edges always
  have one  endpoint among the $k$ vertices and all edges of the graph
  having an endpoint among the $k$ vertices are among the $t$ edges.'' 
\end{proof}

\reclaimtheorem{theorem:clusterell}

\begin{proof}
  Let $G = (V,E)$ and $k$ be given as input. For the moment, assume
  that $G$ \emph{can} be clustered after $k$ edge
  modifications and let $C = \{C_1,\dots,C_\ell\}$ be a solution, that
  is, a partitioning of $V$ such that $R = \{\,\{u,v\} \mid u \in C_i, v
  \in C_j, i\neq j\,\} \cap E$ (these edges need to be
  removed) and $A = \{\,\{u,v\} \mid u,v \in C_i\,\} \setminus
  E$ (these edges need to be added) together have size is at
  most~$k$. Define $M = \bigcup R \cup \bigcup A$ as the set of all
  vertices attached to edges that need to be 
  modified. Let us call a cluster $C_i$ \emph{partly modified} if
  $C_i \not\subseteq M$, and \emph{completely modified} if $C_i
  \subseteq M$.

  Our objective is to determine the clusters~$C_i$ without knowing
  them. Towards this aim, we apply color coding for an
  $(n,2k+\ell,2)$-universal coloring family with the colors blue and
  orange. If a clustering~$C$ exsists, at least one coloring has the
  following two properties:  
  \begin{enumerate}
  \item All vertices in $M$ are colored blue. (Hence, all completely
    modified clusters will be completely blue.)
  \item In each partly modified cluster $D$, at least one vertex
    $D \setminus M$ is colored orange. Let $d$ be the smallest
    such vertex with respect to the ordering of the vertices in the
    input. 
  \end{enumerate}

  To identify the partly modified clusters, we consider only orange
  vertices. Since all vertices in $M$ are colored blue, edges incident
  to orange vertices will not change. Adjacent orange vertices will
  therefore belong to the same cluster and non-adjacent ones will
  belong to different clusters. Hence, we can 
  identify the vertex~$d$ in each partly modified cluster: It is an
  orange vertex that is not adjacent to a smaller 
  orange vertex. Observe that in $G$ the vertex~$d$ has the following
  property:
  \begin{quote}
  ($*$) All vertices connected to $d$ form a partly modified
    cluster $D$.
  \end{quote}

  Thus, the orange vertices that are not adjacent to smaller orange
  vertices induce a set of partly modified clusters. We count their
  number and count the total number $m_1$ of modifications needed to
  form them. Since $m_1 \le
  k$ must hold, we can compute $m_1$ in constant parallel time by
  Lemma~\ref{lem:threshold}. 
  
  It remains to consider the number of modifications needed to form
  the completely modified clusters. However, the total number of
  vertices in these clusters is at most~$2k$; so after conceptually
  removing all vertices that are part of partly modified clusters, at
  most $2k$ vertices may remain. We can compute the subgraph induced
  by these vertices in constant time 
  (using the same argument as in the proof of Theorem~\ref{thm:meta}
  for the construction of the subgraph induced on $N_r(a)$) and then
  solve this kernel in constant time, yielding a minimum number $m_2$
  of modifications needed to create the completely covered
  clusters. We accept when $m_1 + m_2 \le k$ and the number of partly
  modified clusters and completely modified clusters is~$\ell$.
\end{proof}

\reclaimcorollary{corollary:manycluster}

\begin{proof}
  The argument is similar to the one from
  Theorem~\ref{theorem:clusterell}, but we first apply a preprocessing
  to find components of~$G$ that are already cliques. Since the number
  of sought clusters (cliques) is not limited, no optimal solution will
  ever modify edges adjacent to vertices in such a clique and, thus,
  we can conceptually remove them from the input. To identify these
  vertices, let us call a vertex \emph{cliquish} if all its
  neighbors are pairwise connected in~$G$; a property that we can
  easily test in constant time. The set $X$ of cliquish vertices
  contains exactly all vertices of clusters already present in~$G$.

  The argument now continues as in Theorem~\ref{theorem:clusterell},
  only we (1) completely ignore the vertices in $X$ and (2) look for
  \emph{up to $\ell = 2k$} partly or completely modified clusters
  rather than exactly $\ell$ such clusters. 
\end{proof}

\reclaimcorollary{corollary:cluster}

\begin{proof}
  Our argument starts as in the proof of
  Corollary~\ref{corollary:manycluster}: We identify components that are
  already cliques. However, this time, we only add such a
  component to~$X$ if (1) its size is larger than $k$ because, then,
  it cannot be part of any editing or if (2) there are $2k$ such
  components of the same size $s \le k$ earlier in the
  input\footnote{A component $A$ is ``earlier in the input than a
    component~$B$'' if there is a vertex in $A$ whose position in the
    input is before all vertices in~$B$.}, because we can apply any
  necessary modifications to these $2k$ earlier components. Now, we
  count the number $x$ of clusters in $X$ (this is the only place
  where we need a $\Class{TC}^0$ circuit). If $\ell' := \ell - x >
  2k + k(k+1)/2$, we know the coloring cannot lead to a solution. Otherwise, we
  ask whether the graph $G$ without~$X$ together with the numbers
  $\ell'$ and~$k$ is an instance of
  $\PLang[\mathit{k,\ell}]{cluster-editing}$. 
\end{proof}

\reclaimtheorem{theorem:multiclustering}

\begin{proof}
  The start of our proof is identical to the one of 
  Theorem~\ref{theorem:clusterell} and we use the same
  terminology. The first difference concerns the property ($*$) from the
  proof: When all clusters are cliques, an orange vertex in such a
  clique immediately identifies all vertices in it, namely as the set
  of its neighbors. For $p$-partite graphs, this is more difficult and
  we use a new definition of equivalence: Let us call two orange
  vertices \emph{equivalent} if they are adjacent in~$G$ or if they
  have the same neighborhood in~$G$. On orange vertices, this is,
  indeed, an equivalence relation and some vertices $d$ will have the
  property that they are minimal with respect to this relation. For
  such vertices, we make a new observation:
  \begin{quote}
  ($**$) The partly modified cluster containing $d$ consists of
    $d$, all orange vertices equivalent to~$d$, all blue vertices
    connected to any of these vertices, and possibly some additional
    vertices from~$M$. 
  \end{quote}
  
  The ``additional vertices from~$M$'' arise for instance in a
  bipartite cluster when one shore contains only blue vertices and the
  other contains some blue and some orange vertices. Then the blue
  vertices of the second shore do not have any orange neighbors. Let
  $Y$ be the set of all vertices that are ``definitely identified'' by
  rule ($**$), that is, the set of all orange vertices and together with
  its neighborhood. If $|V| - |Y| > 2k$ we can stop, since only
  vertices from $M$ may be missing from~$Y$. We can also stop when the
  number of identified partly modified clusters is more
  than~$\ell$.

  We can now identify all partly modified clusters~$D$ -- except that
  some of the vertices in~$M$ may still be lacking --, but we do not
  yet know which one is $C_1$, which one is $C_2$, and so on. We try 
  out, in parallel, all possible injective mappings 
  from the identified clusters to the set of indices
  $\{1,\dots,\ell\}$ together with all possible ways of mapping the at
  most $2k$ vertices in $V \setminus Y$ to $\{1,\dots,\ell\}$. Each
  pair of mappings determines a possible clustering
  $\{C_1,\dots,C_\ell\}$ and  we can now (1) compute the number of
  edge removals that are necessary to remove all edges between
  clusters and (2) use Corollary~\ref{corollary:completeppartite} to
  determine the minimal number of editing operations necessary to make
  the $i$th cluster $C_i$ a complete $p_i$-partite graph in
  $\Para\Class{AC}^0$ or $\Para\Class{TC}^0$, depending on whether the
  values of the $p_i$ are parameters or not. We accept when the total
  number of modifications is at most~$k$.
\end{proof}

\reclaimtheorem{theorem:cutting}
\begin{proof}
  We begin with the first item. For
  this, we make use of a family of $(n,k+\ell,2)$-universal coloring
  functions. If $G$ contains a set~$X$ of $\ell$ vertices that
  can be separated from the remaining vertices by removing a set $S$
  of at most $k$ vertices, then the family of coloring functions
  contains a coloring such that the vertices of $X$ are colored with
  the first color, say blue, and the vertices of the set $S$ are
  colored with the second color, say orange. Hence, we iterate over
  these colorings, and for each coloring we try to find $X$ and $S$ by
  searching connected components of blue vertices in the
  graph. For this, we iterate over all vertices $x$ of the graph and
  each time check whether it is part of a set $X$ with the desired
  properties: 
  
  We can find out whether a vertex $y$ has distance at most $d$
  from $x$ in the blue subgraph in $\Para\Class{AC}_{f(d)}$ by
  Lemma~\ref{lem:distance}. If there is some $y$ at distance $\ell+1$,
  we know that the component containing $x$ is too large and we can
  stop. Otherwise, we can identify all vertices $y$ reachable from $x$
  inside the blue component in $\Para\Class{AC}_{f(\ell)}$. If the
  number of such vertices in $\ell$ (we can test this even in constant
  depth using Lemma~\ref{lem:threshold}), test whether the number of
  orange vertices connected to any such $y$ is at most~$k$ (again,
  this test can be done in constant time).

  To prove the second item, we proceed in a
  similar way, but instead of searching for connected components of
  size~$\ell$, we search for a blue component of size at most $\ell$
  that has at most $k$ orange neighbors. 
\end{proof}

\end{document}